\documentclass[conference]{IEEEtran}
\IEEEoverridecommandlockouts

\usepackage{amsmath,amssymb,amsthm,bm}
\usepackage{mathtools}
\usepackage{graphicx}
\usepackage{booktabs}
\usepackage{multirow}
\usepackage{hyperref}
\usepackage{enumitem}
\usepackage{cite}
\usepackage{float}
\usepackage{braket}
\usepackage{caption}
\usepackage[framemethod=TikZ]{mdframed}
\usepackage{xcolor}

\hypersetup{
  breaklinks=true,
  colorlinks=true,
  linkcolor=blue,
  urlcolor=blue,
  citecolor=blue
}

\graphicspath{{./}}
\newcommand{\one}{\mathbf{1}}
\theoremstyle{plain}
\newtheorem{theorem}{Theorem}[section]
\newtheorem{proposition}{Proposition}
\newtheorem{lemma}[proposition]{Lemma}
\theoremstyle{remark}
\newtheorem{remark}[proposition]{Remark}
\theoremstyle{definition}

\newmdtheoremenv[
    linewidth=0.5pt,
    linecolor=black,
    backgroundcolor=black!3,
    innertopmargin=6pt,
    innerbottommargin=6pt,
    innerleftmargin=8pt,
    innerrightmargin=8pt,
    roundcorner=2pt,
    skipabove=6pt,
    skipbelow=6pt
]{definition}[proposition]{Definition}

\title{Quantum Injection Pathways for \\ Implicit Graph Neural Networks}

\author{
\IEEEauthorblockN{Pengyuan Xu}
\IEEEauthorblockA{
\textit{Dept. of Computer Science}\\
\textit{University of Victoria}\\
 Victoria, Canada\\
\url{pengyuanxu@uvic.ca}}
\and
\IEEEauthorblockN{Tristan Zaborniak}
\IEEEauthorblockA{
\textit{Dept. of Computer Science}\\
\textit{University of Victoria}\\
 Victoria, Canada\\
\url{tristanz@uvic.ca}}
\and
\IEEEauthorblockN{Luis F. Rivera}
\IEEEauthorblockA{
\textit{Dept. of Computer Science}\\
\textit{University of Victoria}\\
 Victoria, Canada\\
\url{rivera@uvic.ca}}
\and
\IEEEauthorblockN{Hausi A. Muller}
\IEEEauthorblockA{
\textit{Dept. of Computer Science}\\
\textit{University of Victoria}\\
 Victoria, Canada\\
\url{hausi@uvic.ca}}
}

\begin{document}
\maketitle
\thispagestyle{plain}
\pagestyle{plain}

\begin{abstract}
Deep Equilibrium Models (DEQs) replace a stack of explicit layers with a single operator whose fixed point defines the output, giving the expressive power of an arbitrarily deep network at the memory cost of a single layer. Quantum Deep Equilibrium Models (QDEQs) bring this idea to quantum machine learning, offering an alternative to Parameterized Quantum Circuits (PQCs), whose depth is limited by hardware coherence and trainability. Here, we introduce, formulate, and compare three ways of coupling a quantum signal to graph DEQs, differing in where the signal enters the fixed-point operator. \textit{Independent} injection computes the quantum signal once per graph and forward fixed-point solve, and holds it fixed throughout the solve. \textit{State-dependent} injection instead recomputes the signal at every solver step and applies it to the current iterate. \textit{Backbone-dependent} injection likewise recomputes at every iteration but applies the signal to the classical backbone's output evaluated at the current iterate. We establish contraction guarantees for each variant under explicit assumptions on the Lipschitz constants of the classical backbone and the quantum signal. On the TU Dortmund graph-classification benchmarks NCI1, PROTEINS, and MUTAG, independent injection achieves the best test accuracy while using fewer forward-solver iterations than both the classical equilibrium baseline and the two dependent variants. 
\end{abstract}

\smallskip

\begin{IEEEkeywords}
quantum machine learning, graph neural networks, implicit neural networks, deep equilibrium models, quantum deep equilibrium models, quantum-classical learning, parameterized quantum circuits, graph classification
\end{IEEEkeywords}

\section{Introduction and Background}\label{sec:introduction}

\subsection{Graph Neural Networks and Implicit Graph Models}\label{sec:intro-gnns}

Graph neural networks (GNNs) are a leading machine learning framework for graph-structured data in chemistry, biology, and relational reasoning \cite{GNNreview18,GNNreview24}. Common message-passing architectures such as GCN, GraphSAGE, and GIN use explicit finite-depth propagation \cite{GCN,GraphSAGE,GIN}; increasing this depth can improve accuracy but often results in growing optimization difficulty \cite{OonoSuzuki2020}. This motivates alternatives that reduce reliance on increasingly deep explicit propagation.

Deep equilibrium models (DEQs) offer one such alternative: they define outputs as fixed points of a single nonlinear operator, trained by implicit differentiation through that fixed point \cite{DEQ}. \textit{Implicit graph models} adapt this idea to graph learning \cite{IGNN}, where the solution of a \textit{graph-dependent} fixed-point equation (instead of the output of an explicit propagation message-passing GNN) plays the role of the final layer node representation. This yields the representational reach of an arbitrarily deep network at the memory cost of a single layer. The corresponding fixed-point map is the \textit{equilibrium operator}, formalized in Definition \ref{def:equilibrium}.

\begin{definition}[Equilibrium operator and equilibrium state]\label{def:equilibrium}

    Fix a graph $G$ with $N$ nodes, a hidden dimension $d_h$, a normalized propagation matrix $\tilde A \in \mathbb{R}^{N\times N}$ derived from the graph adjacency matrix $A$, and an encoder output $H \in \mathbb{R}^{N\times d_h}$ that summarizes the graph's input features. An \textit{equilibrium operator} $\Phi_{\Theta}$ is a parameter-dependent map:
    
    \begin{equation}\label{eq:op-def}
        \Phi_{\Theta}(\,\cdot\,;\tilde A, H) : \mathbb{R}^{N\times d_h} \to \mathbb{R}^{N\times d_h}
    \end{equation}
    
    \noindent where $\Theta$ denotes trainable parameters. A matrix $Z \in \mathbb{R}^{N\times d_h}$ in its domain is a \textit{candidate node-state matrix}: one row per node, each row a $d_h$-dimensional hidden feature vector. An \textit{equilibrium state} $Z^\star$ is a fixed point satisfying:
    
    \begin{equation}\label{eq:fp-def}
        Z^\star = \Phi_{\Theta}(Z^\star;\tilde A, H)
    \end{equation}

\end{definition}

For \textit{graph-level prediction}, the implicit graph pipeline has three stages. First, a \textit{feed-forward encoder} $H = \mathrm{Enc}(A, X_0)$ maps node features $X_0$ and adjacency matrix $A$ to hidden node features $H$. Second, the \textit{equilibrium operator} $\Phi_\Theta(\,\cdot\,;\tilde{A},H)$ of Definition \ref{def:equilibrium} is solved, yielding the node representation $Z^\star$. Finally, a \textit{permutation-invariant readout} pools $Z^\star$ into a graph representation, followed by a classifier. The forward pass combines one encoder evaluation with a fixed-point solve, while the backward pass obtains gradients through the equilibrium operator by implicit differentiation. In practice, $Z^\star$ is approximated by an iterative solver, as in Definition \ref{def:solver}. We therefore treat solver cost as a quantity of interest.

By Definition \ref{def:equilibrium}, input information affects $Z^\star$ only through the dependence of $\Phi_\Theta$ on graph-derived quantities such as $(\tilde A,H)$. Without such dependence, the fixed point would be determined only by parameters and topology, making the output insensitive to node features. IGNNs typically address this through an input-injection term derived from $H$ \cite{IGNN}. In the hybrid setting studied here, we consider injections deriving from a quantum module acting on $H$ or on the iterate $Z^{(t)}$.

\begin{definition}[Fixed-point solver and solver step]\label{def:solver}
    A \textit{fixed-point solver} for $\Phi_\Theta$ generates a sequence $\{Z^{(t)}\}_{t\ge 0} \subset \mathbb{R}^{N\times d_h}$ from an initialization $Z^{(0)}$ via
    
    \begin{equation}\label{eq:solver-step}
        Z^{(t+1)} = \mathcal{S}\bigl(Z^{(t)};\Phi_\Theta,\tilde A,H\bigr)
    \end{equation}
    
    \noindent where $\mathcal{S}$ is either Picard iteration $\mathcal{S}(Z;\Phi_\Theta,\cdot,\cdot) = \Phi_\Theta(Z;\cdot,\cdot)$ or a quasi-Newton acceleration thereof (Anderson acceleration, Broyden's method, \textit{etc.}) \cite{TorchDEQ}. Each application of $\mathcal S$ is a \textit{solver step}. The iteration is stopped when $\|Z^{(t+1)}-Z^{(t)}\|_F$ falls below a set tolerance; the final iterate is taken as $Z^\star$.
\end{definition}

Whatever form these signals take, \textit{where} they are inserted is a major design choice, because $\Phi_\Theta$ can accommodate such a signal in two structurally distinct ways: 

\begin{enumerate}
    \item[\textnormal{(i)}] A \textit{conditioning} term may be computed once from $(\tilde A, H)$ and added to $\Phi_\Theta$ as an iterate-independent bias. Once computed, it remains constant across all solver steps.
    \item[\textnormal{(ii)}] A \textit{residual} term may be added which depends on the current iterate $Z^{(t)}$. It must therefore be recomputed at every solver step, since $Z^{(t)}$ changes from step to step. 
\end{enumerate}

\noindent The two injection routes differ in how the signal interacts with the solver dynamics of Definition \ref{def:solver} and with the contraction properties of $\Phi_\Theta$ \cite{DEQ,MonotoneIGNN}, affecting both well-posedness (\textit{i.e.}, whether $\Phi_\Theta$ is contractive) and solver cost \cite{TorchDEQ}.

\subsection{Quantum Machine Learning and Depth Constraints}\label{sec:intro-qml}

In parallel to these developments in classical implicit modeling, quantum machine learning has developed a broad family of hybrid quantum-classical models built on the \textit{encode--unitary--measure} pattern, in which classical data are encoded into a quantum state by a data-dependent unitary, processed by a parametrized quantum circuit (PQC), and read out through measurements of chosen observables \cite{VQA,PQCReview2019}. Applied to graph-structured data, this template has been instantiated as quantum graph kernels, quantum graph neural networks, hybrid graph classifiers, and graph-generation pipelines \cite{QCE22Embedding,GraphQNTK,QCE24HQCGNN,QNetGAN}. Recent work has also used graph neural networks to analyze parameterized quantum circuits themselves \cite{QCE24GNN_EE}. These models, however, remain \textit{explicit finite-depth architectures}.

This matters because depth is a central design constraint for PQC-based models, and one that equilibrium methods are well-positioned to relieve. Deeper circuits are more expressive but accumulate hardware errors under realistic coherence budgets and are prone to \textit{barren plateaus}: training landscapes in which gradients vanish exponentially in the number of qubits or layers \cite{BarrenPlateauOriginal,Ragone2024,Fontana2024}. Each variational parameter further incurs a measurement cost through the parameter-shift rule used for gradient evaluation \cite{ParameterShift,QuantumNeuralNetworks}. These pressures favor shallower circuits with fewer independent parameters. 

Such a structure is precisely what an equilibrium approach can supply, by replacing an explicitly stacked PQC with a circuit whose output is defined as a fixed point. Recently introduced, Quantum Deep Equilibrium Models (QDEQs) \cite{QDEQ} realize this idea: a shallow circuit, evaluated by a root-finder, can match or exceed the accuracy of a much deeper explicit quantum network on unstructured data. However, application of this approach to the graph setting, the focus of our contribution here, has not been addressed.

\subsection{Contributions}\label{sec:intro-contrib}

We formulate three injection pathways that couple an encode--unitary--measure quantum module to a shared classical equilibrium map (classical backbone) whose fixed point would define the model state in the absence of any quantum component. The three pathways differ in where the quantum signal enters $\Phi_\Theta$ in the sense of Definition \ref{def:equilibrium}:

\begin{enumerate}
    \item[\textnormal{(i)}] \textit{Independent injection (ID)}: the quantum signal enters as an additive iterate-independent conditioning term, derived once from the encoder output $H$ and graph topology descriptors $ \tau(A)$, and held fixed across all solver steps.
    \item[\textnormal{(ii)}] \textit{State-dependent injection (SD)}: the quantum signal enters as a residual acting on the current iterate $Z^{(t)}$, and is recomputed at every solver step.
    \item[\textnormal{(iii)}] \textit{Backbone-dependent injection (BD)}: the same residual as in SD injection is applied instead to the output of the classical backbone at the current iterate.
\end{enumerate}

\noindent For each of these injection pathways we: (a) establish contraction guarantees under explicit assumptions on the Lipschitz constants of the classical backbone and quantum signal (Theorems \ref{theorem:independent-contraction}--\ref{theorem:backbone-contraction}), and (b) benchmark their training time and accuracy using a noiseless tensor network simulator against three TU Dortmund graph classification benchmarks (NCI1, PROTEINS, MUTAG), and compare to a classical IGNN baseline. We find that while SD and BD injections achieve accuracies similar to the classical IGNN baseline at the cost of higher training time, ID injection achieves improved accuracy for comparable training time.

\section{Related Work}\label{sec:related}

Our contribution sits at the intersection of three lines of research that have largely developed independently: classical implicit and equilibrium models (including their graph-specific refinements), quantum machine learning over graphs with PQCs, and quantum deep equilibrium models. The introduction sketched each; we now record the technical details most relevant to \S\ref{sec:method} and draw out the specific gap we address.

\subsection{Deep Equilibrium and Implicit Graph Models}

Beyond the basic DEQ construction \cite{DEQ,IDL}, stability and training of equilibrium models have received sustained attention. Certified \cite{CerDEQ2022} and Lyapunov-stable \cite{LyaDEQ2023} variants give formal guarantees on perturbation behavior, while the TorchDEQ library \cite{TorchDEQ} standardizes solver choices such as Picard, Anderson, and Broyden methods together with implicit-differentiation training options.

The IGNN equilibrium operator is bounded component-wise by a nonnegative matrix whose spectral radius is controlled through a constraint on the trainable weights, guaranteeing a unique fixed point \cite{IGNN}. Subsequent work refines this through monotone-operator theory, both in the general DEQ setting \cite{MonotoneIGNN} and for implicit graph models specifically \cite{Baker2023MonotoneIGNN}. Efficient and multiscale variants \cite{EIGNN,MGNNI} reduce solver cost, and alternative formulations are based on nonlinear diffusion dynamics \cite{GIND}. Our Theorems \ref{theorem:independent-contraction}--\ref{theorem:backbone-contraction} follow the contraction template of the IGNN analysis in Ref.~\cite{IGNN}, applied to an equilibrium operator carrying an additional quantum term.

\subsection{Quantum Graph Neural Networks}

Work on explicit quantum graph learning has developed four distinct strategies for where the quantum module enters the pipeline. At the \textit{kernel level}, graph data are processed classically and passed to a quantum-kernel classifier \cite{GraphQNTK}. At the \textit{encoder level}, node features are processed by a PQC before classical message passing \cite{QCE22Embedding}. At the \textit{message-passing level}, quantum circuits implement the per-layer update, either as learned convolutions \cite{QGCN,Hu2022QuGCNDesign} or as more general trainable quantum graph layers \cite{QGNN,GQGLA,InductiveQGNN}; Ref.~\cite{QGNNreview24} offers a survey of this strategy. At the \textit{readout level}, hybrid classifiers apply a quantum module after a classical graph representation is produced \cite{QCE24HQCGNN}. Separately, quantum graph generation for molecular design composes quantum modules into graph decoders \cite{QNetGAN}, and GNNs have been used to analyze parameterized quantum circuits rather than predict over graph data \cite{QCE24GNN_EE,HaQGNN}. None of these approaches transfers directly to the equilibrium operators characteristic of the implicit models we study here, but our pathways draw on two of them: ID injection operates at the encoder level, conditioning on the encoder output before the fixed-point solve, while SD and BD injection operate at the message-passing level, entering the per-iterate update of the equilibrium operator.

\subsection{Quantum Deep Equilibrium Models}

The QDEQ of Schleich \textit{et al.}~\cite{QDEQ} is the closest quantum precedent to the work we present here, being the first to combine equilibrium-style fixed-point training with parameterized quantum circuits. Two of its results directly parallel pieces of our own analysis: a universality theorem for weight-tied, input-injected quantum circuits, and a Lipschitz bound on the encode--unitary--measure map.

First, they extend the universality result of \cite{DEQ}: a weight-tied quantum circuit (layers sharing parameters $\theta$) with input injection is shown to be equivalent to an untied stack of independently parametrized quantum layers, in that the untied stack's output appears as one component of the tied network's equilibrium. This justifies the weight-tied, input-injected fixed point as a stand-in for arbitrarily deep quantum networks. Our architecture is weight-tied and input-injected in the same sense, allowing the quantum modules of our injection pathways to inherit the same universality.

Second, they establish Lipschitz continuity of the quantum encode--unitary--measure map itself: for amplitude or angle encoding and measurements restricted to Pauli operators or basis-state projectors, $\lVert f(\mathbf{z}) - f(\mathbf{z}')\rVert_2 \le c \|M\lVert_2\ket{\mathbf{z}}\bra{\mathbf{z}} - \ket{\mathbf{z'}}\bra{\mathbf{z}'}\rVert_\textnormal{Tr}$, where $c = 1$ for amplitude encoding and $c = 2$ for angle encoding, $\lVert M\rVert_2$ is the spectral norm of the measurement observable, and $f$ is quantum model function (see Definition 1 of Ref.~\cite{QDEQ}). This gives an analytic handle on the Lipschitz constants appearing in Theorems \ref{theorem:state-contraction}--\ref{theorem:backbone-contraction}: the quantum residual inherits this bound in composition with the linear maps that embed hidden states into and out of the quantum input space. Our setting differs in two respects: the input is graph-structured and carries topological information through $\tilde{A}$, and the equilibrium operator consists of a classical backbone injected with a quantum signal.

\subsection{Auxiliary Signal Coupling Strategies}

How an auxiliary signal is coupled to a host architecture has a long classical history, including feature-wise linear modulation (FiLM) \cite{FiLM} and residual-learning variants in hybrid quantum-classical settings \cite{ResHQCNN,ResQNets}. Related work on integration strategy effects in hybrid QML has examined embedding choice \cite{QCE22Embedding} and hyperparameter sensitivity \cite{QCE24OH_GNN}. These studies establish that coupling choice matters for robustness, expressivity, and trainability, but all operate within explicit, finite-depth forward passes. The operator-level question we study---\textit{where} an auxiliary signal enters a fixed-point equilibrium operator---only arises once the host architecture is a QDEQ, and does not appear in this literature. To the best of our knowledge, the three injection pathways we introduce are the first systematic taxonomy of this operator-level coupling choice for quantum modules in implicit graph models.

\begin{figure*}[!t]
    \centering
    \includegraphics[width=0.95\textwidth]{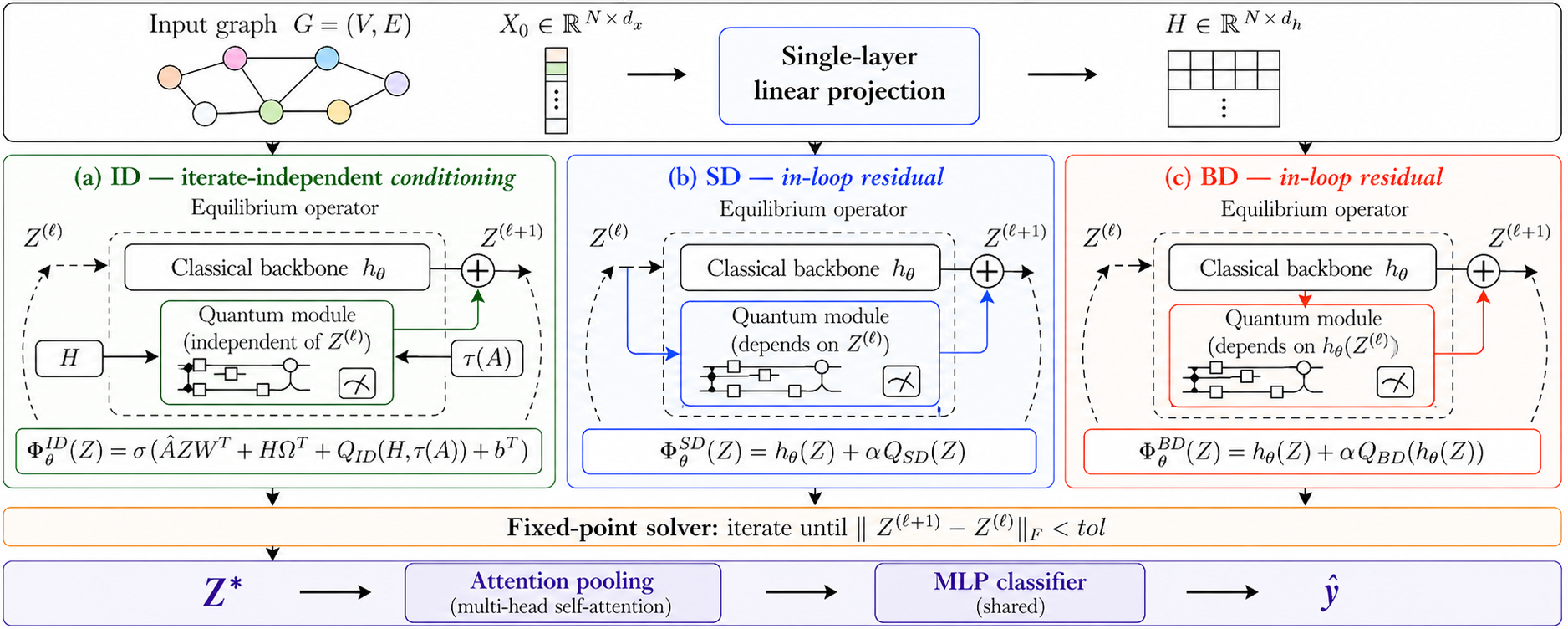}
    \caption{Quantum injection pathways in a shared implicit graph framework. Panels (a)--(c) show the ID, SD, and BD variants as three distinct equilibrium operators built atop the common classical backbone $h_\theta$ of Equation \eqref{eq:common-backbone}. The dashed box marks the fixed-point solver loop (Definition \ref{def:solver}): modules inside are re-evaluated at every solver step, while modules outside are computed once per forward propagation.}
    \label{fig:injection-pathways}
\end{figure*}

\section{Quantum Injection Pathways for IGNNs}\label{sec:method}

This section makes the injection pathway constructions sketched in \S\ref{sec:introduction} concrete. We first fix a common classical backbone---a shared encoder, propagation matrix, and classical equilibrium map---against which the three injection pathways can be compared on equal footing (\S\ref{sec:problem-setup}). We then specify the encode--unitary--measure \textit{quantum module} that the injection pathways share (\S\ref{sec:quantum-module}), and define the three coupling pathways as three distinct equilibrium operators built on top of the shared backbone (\S\ref{sec:pathways}). Finally, we establish well-posedness by deriving Lipschitz bounds on each injected equilibrium operator under explicit assumptions on the classical backbone and quantum modules (\S\ref{sec:contraction}).

\subsection{Problem Setup and Common Backbone}\label{sec:problem-setup}

Consider a graph $G=(V,E)$ with $N = |V|$ nodes, adjacency matrix $A \in \{0,1\}^{N \times N}$, and row-wise node features:

\begin{equation}
    X_0=[x_1^{(0)};\dots;x_N^{(0)}]\in\mathbb{R}^{N \times f}
\end{equation}

\noindent where each node $v\in V$ carries a feature vector $x_v^{(0)}\in\mathbb{R}^f$ and $f$ is the input feature dimension. A common encoder maps this raw representation to hidden node features:

\begin{equation}
    H = \mathrm{Enc}(A, X_0) = [h_1;\dots;h_N] \in \mathbb{R}^{N \times d_h}
\end{equation}

\noindent where $d_h$ is the hidden feature dimension, a model hyperparameter fixed across all pathways.

Let $\tilde{A} \in \mathbb{R}^{N \times N}$ denote the \textit{normalized propagation matrix}. In our experiments, we use the symmetrically normalized adjacency with self-loops introduced by Kipf and Welling \cite{GCN}:

\begin{equation}\label{eq:normalized-adjacency}
    \tilde{A} = \hat{D}^{-1/2}\,(A+I_N)\,\hat{D}^{-1/2}
\end{equation}

\noindent where $I_N$ is the $N\times N$ identity and $\hat{D}$ is the diagonal degree matrix of $A+I_N$. This symmetric normalization rescales contributions from each neighbor by the geometric mean of the two endpoints' degrees and ensures $\|\tilde{A}\|_2 \le 1$, which will feature directly in the contraction bounds of \S\ref{sec:contraction}.

Writing $\Theta$ for the full pathway parameters---with $\theta$ collecting the classical backbone weights and $\eta$ a pathway-specific collection of quantum parameters---all three variants define an equilibrium state (Definition \ref{def:equilibrium}):

\begin{equation}\label{eq:common-equilibrium}
    Z^\star = \Phi_{\Theta}(Z^\star;\tilde{A}, H)
\end{equation}

\noindent under the same encoder, propagation matrix, and hidden-state space. This shared-backbone setup lets us compare quantum injection pathways on equal footing.

\begin{definition}[Classical backbone]\label{def:backbone}
    The \textit{classical backbone} associated to parameters $\theta=\{W,\Omega,b\}$, propagation matrix $\tilde A$, and encoder output $H$ is the map:
    
    \begin{equation}\label{eq:common-backbone}
        h_\theta(Z) = \sigma\left(\tilde{A}\, Z\, W^\top + H\, \Omega^\top + \one\, b^\top\right)
    \end{equation}
    
    \noindent where $W,\Omega \in \mathbb{R}^{d_h \times d_h}$ are trainable matrices, $b \in \mathbb{R}^{d_h}$ is a shared bias, $\one\in\mathbb{R}^N$ denotes the all-ones column vector, and $\sigma(\cdot)=\tanh(\cdot)$. By construction, $h_\theta$ is the equilibrium operator (Definition \ref{def:equilibrium}) of an underlying classical IGNN.
\end{definition}

Following the equilibrium-learning viewpoint of Refs.~\cite{DEQ,IGNN}, we use the shared classical backbone, defined in Definition \ref{def:backbone}. The three terms inside $\sigma$ play distinct roles: $\tilde{A} Z W^\top$ is the message-passing step, aggregating each node's neighbors via $\tilde{A}$ and mixing hidden features via $W$; $H \Omega^\top$ is the input-injection term referenced in the introduction, which carries the encoder output into the fixed-point equation as an iterate-independent bias; and $\one b^\top$ is a row-broadcast bias.

\subsection{Quantum Module Definition}\label{sec:quantum-module}

All three injection pathways instantiate the encode--unitary--measure pattern at the node level \cite{QDEQ,PQCReview2019}: a classical input is loaded into a quantum state by a data-dependent unitary, processed by a trainable circuit, and read out through expectation values of chosen observables. We describe this shared construction once here, parameterized by a generic input/output dimension $(d_\mathrm{in}, d_\mathrm{out})$ and trainable parameters $\eta$; the three injection pathways differ only in their input, their parameters, and where the output enters the equilibrium operator. These details are specified in \S\ref{sec:pathways}. 

Each quantum module $q_\eta:\mathbb{R}^{d_\mathrm{in}}\to\mathbb{R}^{d_\mathrm{out}}$ maps a node-specific input $s_v\in\mathbb{R}^{d_\mathrm{in}}$ to an output in three stages. A fourth step that lifts the node-specific outputs to an aggregated all-node matrix yields the state which is then added to the classical backbone in an injection pathway-specific manner.

\subsubsection{Classical-to-Quantum Interface} 

The quantum circuit operates on $n_q$ qubits (per node $v\in V$) and therefore accepts inputs in $\mathbb{R}^{n_q}$; note that $n_q\neq d_\mathrm{in}$ in general. A learned linear map $W_{\mathrm{in}}\in\mathbb{R}^{n_q\times d_\mathrm{in}}$ is therefore employed per node to map hidden input features of dimension $d_\mathrm{in}$ to a size amenable to the quantum circuit, followed by a componentwise $\tanh(\cdot)$:

\begin{equation}\label{eq:qmod-bottleneck}
    u_v = \tanh(W_{\mathrm{in}}\,s_v) \in (-1,1)^{n_q}
\end{equation}

\noindent The $\tanh(\cdot)$ nonlinearity bounds each entry in $(-1,1)$---a requirement of the angle-encoding unitary below.

\subsubsection{Data Encoding, Parametrized Circuit, and Measurement} 

The bounded feature $u_v$ is loaded into a pure $n_q$-qubit state via an angle-encoding unitary $S_{u_v}$:

\begin{equation}\label{eq:qmod-encoding}
    \rho^{\mathrm{enc}} = S_{u_v}\ket{0}^{\otimes n_q}\bra{0}^{\otimes n_q}S_{u_v}^\dagger
\end{equation}

\noindent where each component $u_v^{(j)}$ sets the rotation angle of a single-qubit gate on qubit $j$ \cite{QDEQ}. A trainable $n_q$-qubit unitary $U(\eta)$ is then applied, and for each qubit $j$ the expectation value of a Hermitian observable $M_j$ is computed:

\begin{equation}\label{eq:qmod-measurement}
    m_v^{(j)} = \mathrm{Tr}\bigl[M_j\,U(\eta)\,\rho^{\mathrm{enc}}\,U(\eta)^\dagger\bigr],\quad j=1,\dots,n_q
\end{equation}

\noindent yielding a vector $m_v=[m_v^{(1)},\dots,m_v^{(n_q)}]^\top\in\mathbb{R}^{n_q}$. Each $M_j$ is taken to be a single-qubit Pauli operator ($\|M_j\|_2=1$, expectations in $[-1,1]$), and $U(\eta)$ is instantiated as the repeated \textit{deep XYZ ansatz} described in Appendix \ref{app:ansatz}, inspired by the data re-uploading design of Ref.~\cite{DRQNN}.

\subsubsection{Quantum-to-Classical Interface} 

The expectation vector $m_v$ is lifted back to $\mathbb{R}^{d_\mathrm{out}}$ via a linear map $W_{\mathrm{out}}\in\mathbb{R}^{d_\mathrm{out}\times n_q}$ such that $q_\eta(s_v)=W_{\mathrm{out}}m_v$. In the implementation, $W_{\mathrm{in}}$ and $W_{\mathrm{out}}$ are ordinary trainable linear layers learned jointly with the rest of the model by back-propagation; for the in-loop SD/BD residual modules, the released code additionally applies spectral normalization to these two maps to help control the residual Lipschitz budget. Applied independently to each row of a matrix-valued state $S=[s_1;\dots;s_N]$ with shared parameters $(\eta,W_{\mathrm{in}},W_{\mathrm{out}})$, the node-wise map $q_\eta(s_v)$ induces the row-wise map:

\begin{equation}\label{eq:qmod-rowwise}
    Q(S) := \bigl[q_\eta(s_1);\dots;q_\eta(s_N)\bigr]
\end{equation}

\noindent where $Q\in\mathbb{R}^{N\times d_\mathrm{out}}$. This matrix is then added to the classical backbone according to one of the three injection pathways.

\subsection{The Three Injection Pathways}\label{sec:pathways}

Each pathway defines a distinct equilibrium operator in the sense of Definition~\ref{def:equilibrium} by coupling the quantum module of \S\ref{sec:quantum-module} to the shared classical backbone $h_\theta$ of Equation \eqref{eq:common-backbone}. The three differ along two axes: (i) whether the quantum module is evaluated \textit{once per forward solve} or \textit{at every solver step} (in the sense of Definition~\ref{def:solver}), and (ii) when evaluated in-loop, what the module's input is. We refer to independent injection as an \textit{iterate-independent conditioning} pathway, and to the remaining two as \textit{in-loop residual} pathways. Figure~\ref{fig:injection-pathways} summarizes the resulting operator designs.

\subsubsection{Independent Injection}

Independent injection evaluates the quantum module once per forward solve, \textit{before} the fixed-point iteration begins, and holds its output fixed throughout. The node-level input is the encoder feature $h_i$ augmented with deterministic node-level topology descriptors:

\begin{equation}
    \tau_i(A)=\bigl[c_i^{(3)},\dots,c_i^{(L_{\max})},\,d_i,\,\mathrm{cc}_i,\,\mathbf{1}_{\triangle,i}\bigr]\in\mathbb{R}^{d_\tau} 
\end{equation}

\noindent where $d_\tau=(L_{\max}-2)+3$, $c_i^{(\ell)}$ counts cycles of length $\ell$ containing node $i$ in the cycle basis used by the implementation, $d_i$ is the normalized degree, $\mathrm{cc}_i$ is the clustering coefficient, and $\mathbf{1}_{\triangle,i}$ indicates triangle membership. These descriptors are deterministic functions of $A$ and are not learned.

The quantum module of \S\ref{sec:quantum-module} is instantiated with trainable parameters $\eta$ and node-level input $[h_i;\tau_i(A)]\in\mathbb{R}^{d_\mathrm{in}+d_\tau}$; we write this module as $q_\eta^{\mathrm{ID}}$. The only change relative to \S\ref{sec:quantum-module} is that the classical-to-quantum bottleneck $W_{\mathrm{in}}$ of Equation \eqref{eq:qmod-bottleneck} has shape $n_q\times(d_h+d_\tau)$ to accommodate the augmented input; the encoding \eqref{eq:qmod-encoding}, measurement \eqref{eq:qmod-measurement}, and quantum-to-classical stages are unchanged. Applying $q_\eta^\mathrm{ID}$ row-wise yields:

\begin{equation}\label{eq:qind-def}
    Q_{\mathrm{ID}}(H;\tau(A)):=\bigl[q_\eta^\mathrm{ID}(h_1;\tau_1(A));\dots;q_\eta^\mathrm{ID}(h_N;\tau_N(A))\bigr]
\end{equation}

\noindent where $Q_\mathrm{ID}\in\mathbb{R}^{N\times d_h}$. This is added to $\Phi_\Theta$ as an iterate-independent conditioning term:

\begin{equation}\label{eq:independent-injection}
    \Phi_{\Theta}^{\mathrm{ID}}(Z)=\sigma\!\left(\tilde{A}\,Z\,W^\top + H\,\Omega^\top + Q_{\mathrm{ID}}(H,\tau(A)) + \mathbf{1}\,b^\top\right)
\end{equation}

\noindent With the graph and its encoder output fixed, $Q_{\mathrm{ID}}(H,\tau(A))$ does not depend on $Z$; it sits parallel to the linear input-injection term $H\Omega^\top$ as a precomputed bias that the fixed-point solver sees as a constant. Consequently, independent injection introduces no additional $\partial Q/\partial Z$ contribution to the fixed-point derivative.

\subsubsection{State-Dependent Injection}

State-dependent injection evaluates the quantum module \textit{inside} the fixed-point recurrence with the current iterate $Z$ as its input. Writing:

\begin{equation}
    Q_\mathrm{SD}(Z):=\bigl[q_\eta^\mathrm{SD}(z_1);\dots;q_\eta^\mathrm{SD}(z_N)\bigr]\in\mathbb{R}^{N\times d_h}
\end{equation}

\noindent for the row-wise map of Equation \eqref{eq:qmod-rowwise} with pathway-specific parameters $\eta$, the equilibrium operator is:

\begin{equation}\label{eq:state-dependent}
    \Phi_{\Theta}^{\mathrm{SD}}(Z)=h_\theta(Z) + \alpha\, Q_\mathrm{SD}(Z)
\end{equation}

\noindent where $\alpha> 0$ is a residual scale (a model hyperparameter). The quantum residual $Q_\mathrm{SD}(Z)$ is recomputed at every fixed point solver step because its input $Z$ is updated at every step.

\subsubsection{Backbone-Dependent Injection}

Backbone-dependent injection, like state-dependent injection, evaluates the quantum module \textit{inside} the fixed-point recurrence, but takes the classical backbone output of $h_\theta(Z)$ as its input. Writing:

\begin{equation}
    Q_\mathrm{BD}(Z):=\bigl[q_\eta^\mathrm{BD}(z_1);\dots;q_\eta^\mathrm{BD}(z_N)\bigr]\in\mathbb{R}^{N\times d_h}
\end{equation}

\noindent for the row-wise map of Equation \eqref{eq:qmod-rowwise} with pathway-specific parameters $\eta$, the equilibrium operator is:

\begin{equation}\label{eq:backbone-dependent}
    \Phi_{\Theta}^{\mathrm{BD}}(Z)=h_\theta(Z) + \alpha\, Q_\mathrm{BD}\!\bigl(h_\theta(Z)\bigr)
\end{equation}

\noindent where $\alpha\ge 0$ is a residual scale. The quantum residual $Q_\mathrm{BD}(Z)$ is recomputed at every fixed point solver step because its input $Z$ is updated at every step. Note that Equations \eqref{eq:state-dependent} and \eqref{eq:backbone-dependent} differ only in the argument of $Q_{\bullet}$. This single structural change propagates through the Lipschitz analysis of \S\ref{sec:contraction}, yielding strictly different contraction budgets for the two pathways despite identical quantum machinery.

\medskip
The three operators thus span a natural range of couplings: ID keeps the quantum signal outside the state-dependent part of the recurrence, contributes no additional $Z$-dependent Jacobian term; BD places it inside the recurrence but after the classical contraction; and SD places it in direct contact with the iterate. \S\ref{sec:contraction} makes this intuition precise, bounding $\mathrm{Lip}_F(\Phi_\Theta^{\bullet})$ for each pathway and ordering the resulting contraction budgets.

\subsection{Well-posedness and Lipschitz Contraction Analysis}\label{sec:contraction}

We now establish sufficient conditions under which each pathway's equilibrium operator is a contraction---and therefore admits a unique fixed point by Banach's fixed-point theorem \cite{Rudin1976}. The three results (Theorems \ref{theorem:independent-contraction}--\ref{theorem:backbone-contraction}) follow the template of the IGNN analysis in Ref.~\cite{IGNN}, applied to operators that carry an additional quantum term.

Throughout the analysis, matrix-valued state perturbations are measured in the Frobenius norm $\|\cdot\|_F$, and fixed matrices acting by left or right multiplication are bounded in the induced spectral norm $\|\cdot\|_2$. The two norms interact through submultiplicativity. For compatible matrices $A,X,B$:

\begin{equation}\label{eq:norm-submult}
    \|AXB\|_F \le \|A\|_2 \|X\|_F \|B\|_2
\end{equation}

\noindent and $\|M\|_2 \le \|M\|_F$ for every matrix $M$ \cite{HornJohnson2012}. A map $T$ acting on matrix-valued states is said to be $L$-Lipschitz in the Frobenius norm---written $\mathrm{Lip}_F(T) \le L$---if $\|T(S_1)-T(S_2)\|_F \le L\|S_1-S_2\|_F$ for all $S_1,S_2$ in its domain; a contraction is a map with $\mathrm{Lip}_F(T) < 1$. Since $\sigma(\cdot)=\tanh(\cdot)$ is component-wise non-expansive, the induced map $\sigma:\mathbb{R}^{N\times d_h}\to\mathbb{R}^{N\times d_h}$ satisfies $\|\sigma(P)-\sigma(Q)\|_F \le \|P-Q\|_F$ for all $P,Q$, a fact we use repeatedly below.

First, we derive the Lipschitz constant of the classical backbone (Lemma \ref{lem:backbone-lip}). Then we derive the quantum module Lipschitz constants that enter the injection pathway-specific contraction theorems by bounding the node-level quantum map $q_\eta$ of \S\ref{sec:quantum-module} (Lemma \ref{lem:nodewise-lip}), and lift this bound to the row-wise matrix-valued maps $Q_\mathrm{ID}$, $Q_\mathrm{SD}$, $Q_\mathrm{BD}$ defined in \S\ref{sec:pathways} (Lemma \ref{lem:rowwise-lip}). Following this, we derive the Lipschitz constants for each \textit{full} quantum injection pathway and state the conditions required for contraction.

\begin{lemma}[Contraction of the classical backbone]\label{lem:backbone-lip}
    Let $h_\theta$ denote the classical backbone of Equation \eqref{eq:common-backbone}. If $\sigma$ is component-wise non-expansive and $\|W\|_2\le\kappa/\|\tilde{A}\|_2$ for some $0\le\kappa<1$, then for all $Z_1,Z_2\in\mathbb{R}^{N\times d_h}$:
    
    \begin{equation}\label{eq:backbone-lip}
        \|h_\theta(Z_1)-h_\theta(Z_2)\|_F\le\kappa\,\|Z_1-Z_2\|_F
    \end{equation}
\end{lemma}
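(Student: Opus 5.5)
The plan is to compare the two backbone outputs directly through the pre-activation argument of $\sigma$. Fix $Z_1,Z_2\in\mathbb{R}^{N\times d_h}$ and write $P_k=\tilde A Z_k W^\top + H\Omega^\top + \one b^\top$ for $k=1,2$, so that $h_\theta(Z_k)=\sigma(P_k)$. The input-injection term $H\Omega^\top$ and the bias $\one b^\top$ do not depend on the iterate, so they cancel in the difference:
\begin{equation*}
P_1-P_2=\tilde A\,(Z_1-Z_2)\,W^\top .
\end{equation*}

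By the component-wise non-expansiveness of $\sigma$, already recorded before the lemma, we get $\|h_\theta(Z_1)-h_\theta(Z_2)\|_F=\|\sigma(P_1)-\sigma(P_2)\|_F\le\|P_1-P_2\|_F$. Submultiplicativity \eqref{eq:norm-submult} with left factor $\tilde A$ and right factor $W^\top$ then gives
\begin{equation*}
\|P_1-P_2\|_F\le\|\tilde A\|_2\,\|Z_1-Z_2\|_F\,\|W^\top\|_2 .
\end{equation*}
Since $\|W^\top\|_2=\|W\|_2$, the hypothesis $\|W\|_2\le\kappa/\|\tilde A\|_2$ yields $\|\tilde A\|_2\|W\|_2\le\kappa$, which is \eqref{eq:backbone-lip}.

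No step here is a real obstacle. The only points needing care are the following.

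First, the component-wise Lipschitz-$1$ bound must lift to the Frobenius norm. This holds because $\|\cdot\|_F^2$ is a sum of squared entries and each entry satisfies $|\tanh p-\tanh q|\le|p-q|$.

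Second, the degenerate case $\|\tilde A\|_2=0$ makes the hypothesis formally undefined. In that case, however, the $Z$-dependent term vanishes and $h_\theta$ is constant, so the bound holds trivially for any $\kappa\ge 0$. With the normalization \eqref{eq:normalized-adjacency} this case cannot occur anyway, since $\tilde A\neq 0$.
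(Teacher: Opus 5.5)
Your proof is correct and follows the paper's argument exactly: cancel the iterate-independent terms $H\Omega^\top+\one b^\top$ inside $\sigma$, use non-expansiveness of $\sigma$ in the Frobenius norm, then apply the submultiplicativity bound \eqref{eq:norm-submult} to get the factor $\|\tilde A\|_2\|W\|_2\le\kappa$. Your added remarks on lifting the entrywise bound to $\|\cdot\|_F$ and on the degenerate case $\|\tilde A\|_2=0$ are sound details that the paper leaves implicit.
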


\begin{proof}
    Writing $C := H\Omega^\top + \one b^\top$, which is constant in $Z$:
    
    \begin{equation}
        h_\theta(Z_1) - h_\theta(Z_2) = \sigma\!\bigl(\tilde{A}Z_1 W^\top + C\bigr) - \sigma\!\bigl(\tilde{A}Z_2 W^\top + C\bigr)
    \end{equation}
    
    \noindent Since $\sigma$ is component-wise non-expansive and $C$ cancels in the difference, applying Equation \eqref{eq:norm-submult} gives $\|h_\theta(Z_1)-h_\theta(Z_2)\|_F \le \|\tilde{A}\|_2\|W\|_2\|Z_1-Z_2\|_F \le \kappa\,\|Z_1-Z_2\|_F$.
\end{proof}

\begin{lemma}[Node-level Lipschitz bound on the quantum map]\label{lem:nodewise-lip}
    Let $q_{\eta}:\mathbb{R}^{d_\mathrm{in}}\to\mathbb{R}^{d_\mathrm{out}}$ denote the node-level quantum map of \S\ref{sec:quantum-module}, given by the composition of the classical-to-quantum interface (Equation \eqref{eq:qmod-bottleneck}), the angle-encoding unitary (Equation \eqref{eq:qmod-encoding}), the PQC and measurement stage (Equation \eqref{eq:qmod-measurement}), and the quantum-to-classical interface(Equation \eqref{eq:qmod-rowwise}). If the measurement observables satisfy $\|M_j\|_2\le 1$, then $q_{\eta}$ is Lipschitz in the Euclidean norm with constant:
    
    \begin{equation}\label{eq:qmod-lipschitz}
        L_q \;\le\; 2\sqrt{n_q}\,\|W_\mathrm{out}\|_2\,\|W_\mathrm{in}\|_2
    \end{equation}
\end{lemma}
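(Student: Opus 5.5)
We will prove the bound by factoring $q_\eta$ as a composition of four maps and multiplying their Lipschitz constants:
\[
s_v \;\mapsto\; W_{\mathrm{in}}s_v \;\mapsto\; u_v=\tanh(W_{\mathrm{in}}s_v) \;\mapsto\; m_v \;\mapsto\; W_{\mathrm{out}}m_v .
\]
The two linear interfaces contribute $\|W_{\mathrm{in}}\|_2$ and $\|W_{\mathrm{out}}\|_2$ in the Euclidean norm. The componentwise $\tanh$ is non-expansive, as already used in \S\ref{sec:contraction}. The remaining work is to show that the quantum stage $u\mapsto m(u)$, from $\mathbb{R}^{n_q}$ to $\mathbb{R}^{n_q}$, is $2\sqrt{n_q}$-Lipschitz.

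For the quantum stage we first bound each coordinate $m^{(j)}$ separately. Write $\rho(u)=U(\eta)S_u\ket{0}\bra{0}S_u^\dagger U(\eta)^\dagger$. By Hölder's inequality for Schatten norms, together with unitary invariance of the trace norm,
\[
|m^{(j)}(u)-m^{(j)}(u')| \le \|M_j\|_2\,\|\rho(u)-\rho(u')\|_{\mathrm{Tr}} \le \bigl\|\ket{\psi_u}\bra{\psi_u}-\ket{\psi_{u'}}\bra{\psi_{u'}}\bigr\|_{\mathrm{Tr}},
\]
where $\ket{\psi_u}=S_u\ket{0}^{\otimes n_q}$. This is exactly the Lipschitz statement of Ref.~\cite{QDEQ} quoted in the related-work section.

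Next we bound the trace distance of pure states. For pure states,
\[
\|\ket{\psi}\bra{\psi}-\ket{\phi}\bra{\phi}\|_{\mathrm{Tr}} = 2\sqrt{1-|\braket{\psi|\phi}|^2} \le 2\|\ket{\psi}-\ket{\phi}\|_2 .
\]
For angle encoding, $\ket{\psi_u}=\bigotimes_k R(u^{(k)})\ket{0}$ with single-qubit rotations $R(x)=e^{-ixG/2}$ and $\|G\|_2=1$ (Pauli generator). Changing one angle at a time telescopes through the tensor product. Each factor satisfies $\|R(x)\ket{0}-R(x')\ket{0}\|\le |x-x'|/2$. Hence
\[
\|\ket{\psi_u}-\ket{\psi_{u'}}\|_2 \le \tfrac12\|u-u'\|_1 \le \tfrac{\sqrt{n_q}}{2}\|u-u'\|_2 .
\]
This gives a per-coordinate bound of order $\sqrt{n_q}\,\|u-u'\|_2$. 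Stacking the $n_q$ coordinates then costs another factor $\sqrt{n_q}$, which is too much.

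The main obstacle is exactly this constant bookkeeping: getting $2\sqrt{n_q}$ overall rather than $n_q$. We will handle it by exploiting that each $M_j$ is a single-qubit Pauli. Then $m^{(j)}(u)=\mathrm{Tr}[M_j\rho(u)]$ is a smooth function of $u$, and we bound the Jacobian of $u\mapsto m(u)$ directly. Each partial derivative $\partial m^{(j)}/\partial u^{(k)}$ is a commutator expectation, $\tfrac{i}{2}\mathrm{Tr}\bigl[[\tilde G_k,\tilde M_j]\rho\bigr]$, and so has magnitude at most $1$. This bounds every Jacobian entry by $1$.

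That entrywise bound alone gives a Frobenius bound of $n_q$. To do better we will use a mean-value argument coordinatewise. For each fixed $j$, the gradient $\nabla_u m^{(j)}$ has Euclidean norm at most $2$. We obtain this via the pure-state fidelity argument, using $\ell_2$ rather than $\ell_1$: the per-qubit displacements of a product state combine orthogonally to first order, giving $\|\partial_u\ket{\psi_u}\|_{\mathrm{op}}\le \tfrac12$. Hence $|m^{(j)}(u)-m^{(j)}(u')|\le 2\|u-u'\|_2$. Taking the $\ell_2$ norm over the $n_q$ coordinates then yields
\[
\|m(u)-m(u')\|_2 \le 2\sqrt{n_q}\,\|u-u'\|_2 .
\]
If the operator-norm step for the encoding derivative proves delicate, we will fall back on verifying it explicitly via the Gram matrix of the $n_q$ tangent vectors $\tfrac{-i}{2}G_k\ket{\psi_u}$. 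For distinct qubits $k\neq l$ these have real inner parts $\mathrm{Re}\braket{\psi|G_kG_l|\psi}/4$, whose contributions we would need to control. Composing all four stages finally gives $L_q\le 2\sqrt{n_q}\|W_{\mathrm{out}}\|_2\|W_{\mathrm{in}}\|_2$.
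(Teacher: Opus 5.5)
Your outer structure is the paper's proof. The paper factors $q_\eta=W_{\mathrm{out}}\circ f_M\circ\tanh\circ W_{\mathrm{in}}$, imports the per-coordinate estimate $|m^{(j)}(u)-m^{(j)}(u')|\le 2\|M_j\|_2\|u-u'\|_2$ directly from Lemma~1 of \cite{QDEQ}, and stacks the $n_q$ coordinates in $\ell_2$ to get $2\sqrt{n_q}$. You depart only in deriving that per-coordinate estimate yourself, and the step carrying your derivation is not justified.

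You assert that, for an arbitrary Pauli generator, the tangent vectors $t_k=-\tfrac{i}{2}G_k\ket{\psi_u}$ ``combine orthogonally,'' so that $\|\partial_u\ket{\psi_u}\|_{\mathrm{op}}\le\tfrac12$. On a product state, however, $\mathrm{Re}\braket{t_k|t_l}=\tfrac14\braket{0|G_k|0}\braket{0|G_l|0}$ for $k\neq l$. This vanishes for $X$ or $Y$ generators but not for $Z$. If every $G_k=Z$, all $t_k$ equal $-\tfrac{i}{2}\ket{\psi_u}$ and the operator norm is $\sqrt{n_q}/2$. Your route then returns the same $\sqrt{n_q}$ per-coordinate constant you were trying to beat. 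Your Gram-matrix fallback names exactly these off-diagonal entries but leaves them uncontrolled. As written, therefore, the quantum stage is not shown to be $2\sqrt{n_q}$-Lipschitz.

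The excess lies entirely in the global-phase direction, which $\rho$ does not see, so the gap closes quickly. There are two ways to do it.

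\emph{Jacobian route.} The derivative $\partial_k m^{(j)}$ depends on $t_k$ only through its phase-projected part $t_k-\braket{\psi_u|t_k}\ket{\psi_u}=-\tfrac{i}{2}(G_k-\braket{G_k})\ket{\psi_u}$. The Gram matrix of these projected vectors is $\tfrac14$ times the covariance matrix of the $G_k$. On a product state that matrix is diagonal with entries $\mathrm{Var}(G_k)/4\le\tfrac14$, which gives the operator-norm bound you wanted.

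\emph{Overlap route (simpler).} Work with the overlap, which multiplies across tensor factors instead of adding under the triangle inequality; the triangle inequality is where your first attempt picked up $\|u-u'\|_1$. With $\delta=u'-u$,
$|\braket{\psi_u|\psi_{u'}}|=\prod_k|\cos(\delta_k/2)-i\sin(\delta_k/2)\braket{0|G_k|0}|\ge\prod_k|\cos(\delta_k/2)|$.
Hence $1-|\braket{\psi_u|\psi_{u'}}|^2\le\sum_k\sin^2(\delta_k/2)\le\tfrac14\|\delta\|_2^2$, and so $\|\rho(u)-\rho(u')\|_{\mathrm{Tr}}\le\|u-u'\|_2$. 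Combined with your H\"older step, this gives $|m^{(j)}(u)-m^{(j)}(u')|\le\|u-u'\|_2$ under your convention $R(x)=e^{-ixG/2}$. With $e^{-ixG}$ it gives exactly $2\|u-u'\|_2$. Either way this is all the lemma needs.

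With this repair, your argument is a self-contained version of the paper's proof: it replaces the citation to \cite{QDEQ} with an explicit computation and gives a slightly better constant.
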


\begin{proof}
    Decompose $q_{\eta}=W_\mathrm{out}\circ f_M\circ\tanh\circ W_\mathrm{in}$, where $f_M:u\mapsto (m^{(1)}(u),\dots,m^{(n_q)}(u))^\top\in\mathbb{R}^{n_q}$ collects the Pauli expectation values of Equation \eqref{eq:qmod-measurement}. Fix $s,s'\in\mathbb{R}^{d_\mathrm{in}}$ and set $u=\tanh(W_\mathrm{in}s)$, $u'=\tanh(W_\mathrm{in}s')$.

    \smallskip
    
    \noindent \textbf{(i) Classical-to-quantum interface.} Component-wise non-expansiveness of $\tanh(\cdot)$ together with submultiplicativity of the spectral norm gives:
    
    \begin{equation}
        \|u-u'\|_2
        \le \|W_\mathrm{in}(s-s')\|_2
        \le \|W_\mathrm{in}\|_2\,\|s-s'\|_2
    \end{equation}
    
    \noindent \textbf{(ii) Encode--unitary--measure.} By Lemma 1 of \cite{QDEQ}, for each angle-encoded measurement component with Hermitian observable $M_j$:
    
    \begin{equation}
        |m^{(j)}(u)-m^{(j)}(u')|
        \le 2\|M_j\|_2\,\|u-u'\|_2
        \le 2\|u-u'\|_2
    \end{equation}
    
    \noindent where the second inequality uses $\|M_j\|_2\le 1$. Summing over the $n_q$ measured components gives:
    
    \begin{align}
        \|f_M(u)-f_M(u')\|_2^2
        &= \sum_{j=1}^{n_q} |m^{(j)}(u)-m^{(j)}(u')|^2 \nonumber \\
        &\le 4 n_q \|u-u'\|_2^2
    \end{align}
    
    \noindent Therefore:
    
    \begin{equation}
        \|f_M(u)-f_M(u')\|_2
        \le 2\sqrt{n_q}\,\|u-u'\|_2
    \end{equation}
    
    \noindent \textbf{(iii) Quantum-to-classical interface.} Submultiplicativity applied to $W_\mathrm{out}$ gives:
    
    \begin{align}
        \|W_\mathrm{out}\bigl(f_M(u)-f_M(u')\bigr)\|_2
        &\le \|W_\mathrm{out}\|_2\,\|f_M(u)-f_M(u')\|_2
    \end{align}
    
    \noindent Chaining (i)--(iii) yields:
    
    \begin{equation}
        \|q_{\eta}(s)-q_{\eta}(s')\|_2
        \le
        2\sqrt{n_q}\,
        \|W_\mathrm{out}\|_2
        \|W_\mathrm{in}\|_2
        \|s-s'\|_2
    \end{equation}
\end{proof}

\begin{lemma}[Lipschitz extension to the row-wise map]\label{lem:rowwise-lip}
    Let $Q:\mathbb{R}^{N\times d_\mathrm{in}}\to\mathbb{R}^{N\times d_\mathrm{out}}$ be the row-wise map:
    
    \begin{equation}
        Q(S):=\bigl[q_{\eta}(s_1);\dots;q_{\eta}(s_N)\bigr]
    \end{equation}
    
    \noindent associated to the node-level map $q_{\eta}$ as in Equation \eqref{eq:qmod-rowwise}. If $q_{\eta}$ is $L_q$-Lipschitz in the $\ell_2$ norm, then $\mathrm{Lip}_F(Q)\le L_q$.
\end{lemma}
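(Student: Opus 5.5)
The plan is to compute the Frobenius norm of $Q(S_1)-Q(S_2)$ directly, row by row. The Frobenius norm squared is the sum of the squared Euclidean norms of the rows, and $Q$ acts independently on each row with the same node-level map $q_\eta$. So the matrix-level bound should follow from the node-level bound of Lemma~\ref{lem:nodewise-lip} applied to each row, followed by summation.

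Concretely, fix $S_1=[s_1;\dots;s_N]$ and $S_2=[s_1';\dots;s_N']$ in $\mathbb{R}^{N\times d_\mathrm{in}}$. The $v$-th row of $Q(S_1)-Q(S_2)$ is $q_\eta(s_v)-q_\eta(s_v')$, so
\begin{equation*}
    \|Q(S_1)-Q(S_2)\|_F^2=\sum_{v=1}^{N}\|q_\eta(s_v)-q_\eta(s_v')\|_2^2 .
\end{equation*}
By the $L_q$-Lipschitz hypothesis, each summand is at most $L_q^2\|s_v-s_v'\|_2^2$. Summing over $v$ and recognising $\sum_v\|s_v-s_v'\|_2^2=\|S_1-S_2\|_F^2$, we get $\|Q(S_1)-Q(S_2)\|_F^2\le L_q^2\|S_1-S_2\|_F^2$. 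Taking square roots gives $\mathrm{Lip}_F(Q)\le L_q$.

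There is no real obstacle; the argument relies on two points. First, the parameters $(\eta,W_\mathrm{in},W_\mathrm{out})$ are shared across rows, so a single constant $L_q$ controls every row. Second, the Frobenius norm is exactly the $\ell_2$ norm of the stacked rows, so no $\sqrt{N}$ factor appears. It is worth noting that the bound is independent of the number of nodes $N$. This matters downstream, because it lets the constant of Lemma~\ref{lem:nodewise-lip} transfer unchanged to $Q_\mathrm{ID}$, $Q_\mathrm{SD}$ and $Q_\mathrm{BD}$ in the contraction theorems.
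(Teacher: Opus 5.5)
Your proposal is correct and matches the paper's proof: both write $\|Q(S_1)-Q(S_2)\|_F^2$ as a sum of squared row differences, apply the node-level $L_q$-Lipschitz hypothesis to each row, resum to $L_q^2\|S_1-S_2\|_F^2$, and take square roots. Your remark that no $\sqrt{N}$ factor appears, because the Frobenius norm is exactly the $\ell_2$ norm of the stacked rows, is the same point the paper makes when it appeals to the block-diagonal structure of the row-wise map.
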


\begin{proof}

    Identifying each row vector $s\in\mathbb{R}^{d_\mathrm{in}}$ with a $1\times d_\mathrm{in}$ matrix so that $\|s\|_2=\|s\|_F$, we have for $S_1=[s_1^{(1)};\dots;s_N^{(1)}]$ and $S_2=[s_1^{(2)};\dots;s_N^{(2)}]$:
    
    \begin{align}
        \|Q(S_1)-Q(S_2)\|_F^2 
        &= \sum_{i=1}^N \|q_{\eta}(s_i^{(1)})-q_{\eta}(s_i^{(2)})\|_2^2 \nonumber \\
        &\le L_q^2 \sum_{i=1}^N \|s_i^{(1)}-s_i^{(2)}\|_2^2 \nonumber \\
        &= L_q^2\,\|S_1-S_2\|_F^2
    \end{align}
    
    \noindent where the first equality uses the block-diagonal structure of the row-wise map and the inequality uses the node-level $L_q$-Lipschitz hypothesis on $q_{\eta}$. Taking square roots gives $\mathrm{Lip}_F(Q)\le L_q$.
\end{proof}

Lemmas \ref{lem:backbone-lip}--\ref{lem:rowwise-lip} apply to each of the three row-wise maps $Q_\mathrm{ID}$, $Q_\mathrm{SD}$, $Q_\mathrm{BD}$ of \S\ref{sec:pathways}: all three share the encode--unitary--measure structure of \S\ref{sec:quantum-module}, differing only in their input dimension and in their pathway-specific parameters $(\eta,W_\mathrm{in},W_\mathrm{out})$. In particular, $Q_\mathrm{ID}$ has input dimension $d_\mathrm{in}=d_h+d_\tau$, while $Q_\mathrm{SD}$ and $Q_\mathrm{BD}$ have input dimension $d_\mathrm{in}=d_h$. We therefore write $L_q^\mathrm{ID}$, $L_q^\mathrm{SD}$, $L_q^\mathrm{BD}$ for the corresponding Lipschitz constants, each satisfying Equation \eqref{eq:qmod-lipschitz} evaluated per pathway. We now prove the Lipschitz constants and contraction requirements for each injection pathway.

\begin{theorem}[Contraction of independent injection]\label{theorem:independent-contraction}
    Assume $\sigma$ is component-wise non-expansive and $\|W\|_2\le\kappa/\|\tilde{A}\|_2$ for some $0\le\kappa<1$. Then, for a fixed graph, fixed encoder output $H$, fixed topology descriptors $\tau(A)$, and the induced iterate-independent term $Q_\mathrm{ID}(H,\tau(A))$, the ID operator of Equation \eqref{eq:independent-injection} satisfies:
    
    \begin{equation}
        \mathrm{Lip}_F(\Phi_{\Theta}^{\mathrm{ID}})\le\kappa
    \end{equation}
    
    \noindent Consequently, $\Phi_{\Theta}^{\mathrm{ID}}$ is a contraction on $(\mathbb{R}^{N\times d_h},\|\cdot\|_F)$ and admits a unique equilibrium to which fixed-point iteration converges \cite{IGNN,Rudin1976}.
\end{theorem}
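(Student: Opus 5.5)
The plan is to reduce the statement to Lemma~\ref{lem:backbone-lip} by absorbing the quantum term into the constant bias. For the fixed graph, encoder output $H$ and descriptors $\tau(A)$, the matrix $Q_\mathrm{ID}(H,\tau(A))$ is a fixed element of $\mathbb{R}^{N\times d_h}$ with no dependence on $Z$. I would therefore set
\begin{equation*}
    C_\mathrm{ID} := H\Omega^\top + Q_\mathrm{ID}(H,\tau(A)) + \one b^\top,
\end{equation*}
so that $\Phi_\Theta^\mathrm{ID}(Z) = \sigma(\tilde A Z W^\top + C_\mathrm{ID})$. This operator has exactly the form of the classical backbone of Equation~\eqref{eq:common-backbone}, with only the iterate-independent constant changed.

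Next, for arbitrary $Z_1,Z_2$, I would form the difference $\Phi_\Theta^\mathrm{ID}(Z_1)-\Phi_\Theta^\mathrm{ID}(Z_2)$. The constant $C_\mathrm{ID}$ appears identically inside both arguments of $\sigma$. Component-wise non-expansiveness of $\sigma$ then bounds the Frobenius norm of the difference by $\|\tilde A(Z_1-Z_2)W^\top\|_F$. Applying Equation~\eqref{eq:norm-submult} bounds this by $\|\tilde A\|_2\|W\|_2\|Z_1-Z_2\|_F\le\kappa\|Z_1-Z_2\|_F$, using the hypothesis $\|W\|_2\le\kappa/\|\tilde A\|_2$. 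This is verbatim the argument of Lemma~\ref{lem:backbone-lip}, and I would simply cite it with $C$ replaced by $C_\mathrm{ID}$. The resulting bound is $\mathrm{Lip}_F(\Phi_\Theta^\mathrm{ID})\le\kappa$.

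For the consequence, I would note that $(\mathbb{R}^{N\times d_h},\|\cdot\|_F)$ is a finite-dimensional normed space and hence complete. Since $\kappa<1$, Banach's fixed-point theorem gives a unique $Z^\star$, and Picard iterates converge to it geometrically at rate $\kappa$. There is no genuine obstacle here. The only point needing care is to state explicitly that the quantum module is evaluated once, before the solve, so that it contributes nothing to the $Z$-dependence. In particular, no Lipschitz bound from Lemma~\ref{lem:nodewise-lip} is needed. Boundedness or finiteness of $Q_\mathrm{ID}$ is automatic, since the Pauli expectations lie in $[-1,1]$. As a side remark, I would also record that if $\|\tilde A\|_2=0$ the hypothesis should be read as $\tilde A W=0$, which is vacuous in practice since $\|\tilde A\|_2=1$ for the normalization~\eqref{eq:normalized-adjacency}.
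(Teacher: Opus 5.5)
Your proposal is correct and matches the paper's proof: the paper likewise absorbs $Q_\mathrm{ID}(H,\tau(A))$ into the constant $\widetilde C := H\Omega^\top + Q_\mathrm{ID}(H,\tau(A)) + \one b^\top$ and reruns the argument of Lemma~\ref{lem:backbone-lip}. As there, the quantum term cancels inside $\sigma$, so no quantum Lipschitz constant enters, and Banach's theorem on the complete space $(\mathbb{R}^{N\times d_h},\|\cdot\|_F)$ gives the unique equilibrium.
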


\begin{proof}
    By the same argument as Lemma \ref{lem:backbone-lip}, with $\widetilde C := H\Omega^\top + Q_\mathrm{ID}(H,\tau(A)) + \one b^\top$ in place of $C$ (which is still constant in $Z$ for fixed graph and encoder output), we obtain $\|\Phi_{\Theta}^{\mathrm{ID}}(Z_1)-\Phi_{\Theta}^{\mathrm{ID}}(Z_2)\|_F \le \kappa\,\|Z_1-Z_2\|_F$. The quantum term $Q_\mathrm{ID}(H,\tau(A))$ does not depend on $Z$ and therefore cancels inside $\sigma$; in particular, $L_q^\mathrm{ID}$ does not appear in the bound.
\end{proof}

The key observation is that $Q_\mathrm{ID}(H,\tau(A))$, being constant in $Z$, cancels when $\Phi_\Theta^{\mathrm{ID}}(Z_1)-\Phi_\Theta^{\mathrm{ID}}(Z_2)$ is formed, so the quantum term contributes nothing to the Lipschitz bound; in particular, $L_q^\mathrm{ID}$ does not appear on the right-hand side.

\begin{theorem}[Contraction of state-dependent injection]\label{theorem:state-contraction}
    Under the same assumptions on $\sigma$ and $\|W\|_2$ as in Theorem \ref{theorem:independent-contraction}, let $L_q^\mathrm{SD}$ denote the Lipschitz constant of the row-wise quantum map $Q_\mathrm{SD}:\mathbb{R}^{N\times d_h}\to\mathbb{R}^{N\times d_h}$ supplied by Lemmas \ref{lem:nodewise-lip} and \ref{lem:rowwise-lip}. Then the SD operator of Equation \eqref{eq:state-dependent} satisfies:
    
    \begin{equation}\label{eq:sd-lipschitz}
        \mathrm{Lip}_F(\Phi_{\Theta}^{\mathrm{SD}})\le\kappa+\alpha\,L_q^\mathrm{SD}
    \end{equation}
    
    \noindent If $\kappa+\alpha\,L_q^\mathrm{SD}<1$, then $\Phi_{\Theta}^{\mathrm{SD}}$ is a contraction on $(\mathbb{R}^{N\times d_h},\|\cdot\|_F)$ and admits a unique equilibrium to which fixed-point iteration converges \cite{IGNN,Rudin1976}.
\end{theorem}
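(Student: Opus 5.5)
The plan is a triangle-inequality split of the difference $\Phi_{\Theta}^{\mathrm{SD}}(Z_1)-\Phi_{\Theta}^{\mathrm{SD}}(Z_2)$ into its backbone part and its quantum residual part, bounding each with the lemmas already established. Fix $Z_1,Z_2\in\mathbb{R}^{N\times d_h}$. By Equation \eqref{eq:state-dependent},
\begin{equation*}
    \Phi_{\Theta}^{\mathrm{SD}}(Z_1)-\Phi_{\Theta}^{\mathrm{SD}}(Z_2) = \bigl(h_\theta(Z_1)-h_\theta(Z_2)\bigr) + \alpha\bigl(Q_\mathrm{SD}(Z_1)-Q_\mathrm{SD}(Z_2)\bigr),
\end{equation*}
and the triangle inequality for $\|\cdot\|_F$, together with $\alpha>0$, bounds the Frobenius norm of the left side by $\|h_\theta(Z_1)-h_\theta(Z_2)\|_F+\alpha\|Q_\mathrm{SD}(Z_1)-Q_\mathrm{SD}(Z_2)\|_F$.

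For the first term, the hypotheses on $\sigma$ and $\|W\|_2$ are exactly those of Lemma \ref{lem:backbone-lip}, so it is at most $\kappa\|Z_1-Z_2\|_F$. For the second term, I apply Lemma \ref{lem:nodewise-lip} to the node-level map $q_\eta^\mathrm{SD}$ with $d_\mathrm{in}=d_\mathrm{out}=d_h$; this gives an $\ell_2$ Lipschitz constant $L_q^\mathrm{SD}$. Lemma \ref{lem:rowwise-lip} then lifts it to $\mathrm{Lip}_F(Q_\mathrm{SD})\le L_q^\mathrm{SD}$. The rows of $Q_\mathrm{SD}(Z)$ are $q_\eta^\mathrm{SD}(z_i)$ for the rows $z_i$ of $Z$, so no further composition is involved. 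Summing the two bounds yields Equation \eqref{eq:sd-lipschitz}.

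For the consequence, assume $\kappa+\alpha L_q^\mathrm{SD}<1$. Then $\Phi_{\Theta}^{\mathrm{SD}}$ is a contraction on $\mathbb{R}^{N\times d_h}$ with the Frobenius norm. This space is finite-dimensional, hence complete, so Banach's fixed-point theorem gives a unique fixed point $Z^\star$. It also gives geometric convergence of Picard iteration, $\|Z^{(t)}-Z^\star\|_F\le(\kappa+\alpha L_q^\mathrm{SD})^t\|Z^{(0)}-Z^\star\|_F$.

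No step here is genuinely hard, since the argument is routine once the lemmas are available. The one point needing care is the difference from the ID case: $Q_\mathrm{SD}$ sits outside $\sigma$ and depends on $Z$, so it does not cancel, and the argument must use the triangle inequality rather than the non-expansiveness of $\sigma$. A related check is that $L_q^\mathrm{SD}$ is a global Lipschitz constant on all of $\mathbb{R}^{d_h}$, and not merely on a bounded region. This holds because Lemma \ref{lem:nodewise-lip} holds for all inputs, as the $\tanh$ bottleneck keeps angles bounded.
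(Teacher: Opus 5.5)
Your proposal is correct and matches the paper's proof: a triangle-inequality split of $\Phi_{\Theta}^{\mathrm{SD}}(Z_1)-\Phi_{\Theta}^{\mathrm{SD}}(Z_2)$, with the backbone term bounded by Lemma~\ref{lem:backbone-lip} and the quantum residual bounded via Lemmas~\ref{lem:nodewise-lip} and~\ref{lem:rowwise-lip}, followed by Banach's theorem. Your extra details, namely completeness of the finite-dimensional space and the explicit geometric rate for Picard iteration, spell out what the paper leaves to its citations.
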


\begin{proof}
    For any $Z_1,Z_2\in\mathbb{R}^{N\times d_h}$, the triangle inequality gives:
    
    \begin{equation}
        \begin{aligned}
            \|\Phi_{\Theta}^{\mathrm{SD}}(Z_1)-\Phi_{\Theta}^{\mathrm{SD}}(Z_2)\|_F 
            &\le \|h_\theta(Z_1)-h_\theta(Z_2)\|_F \\
            &\quad + \alpha\,\|Q_\mathrm{SD}(Z_1)-Q_\mathrm{SD}(Z_2)\|_F \\
            &{\le} (\kappa+\alpha\,L_q^\mathrm{SD})\,\|Z_1-Z_2\|_F
        \end{aligned}
    \end{equation}
    
    \noindent where the second inequality applies Lemma \ref{lem:backbone-lip}, and $\mathrm{Lip}_F(Q_\mathrm{SD})\le L_q^\mathrm{SD}$ from Lemma \ref{lem:rowwise-lip} on the quantum term.
\end{proof}

Unlike the case of independent injection, the quantum term does not cancel in the difference $\Phi_\Theta^{\mathrm{SD}}(Z_1)-\Phi_\Theta^{\mathrm{SD}}(Z_2)$ because $Q_\mathrm{SD}$ acts directly on the evolving state; the Lipschitz budget must therefore absorb $\alpha\,L_q^\mathrm{SD}$ in full.

\begin{theorem}[Contraction of backbone-dependent injection]\label{theorem:backbone-contraction}
    Under the same assumptions on $\sigma$ and $\|W\|_2$ as in Theorem \ref{theorem:independent-contraction}, let $L_q^\mathrm{BD}$ denote the Lipschitz constant of the row-wise quantum map $Q_\mathrm{BD}:\mathbb{R}^{N\times d_h}\to\mathbb{R}^{N\times d_h}$ supplied by Lemmas \ref{lem:nodewise-lip} and \ref{lem:rowwise-lip}. Then the BD operator of Equation \eqref{eq:backbone-dependent} satisfies:
    
    \begin{equation}\label{eq:bd-lipschitz}
        \mathrm{Lip}_F(\Phi_{\Theta}^{\mathrm{BD}})\le\kappa(1+\alpha\,L_q^\mathrm{BD})
    \end{equation}
    
    \noindent If $\kappa(1+\alpha\,L_q^\mathrm{BD})<1$, then $\Phi_{\Theta}^{\mathrm{BD}}$ is a contraction on $(\mathbb{R}^{N\times d_h},\|\cdot\|_F)$ and admits a unique equilibrium to which fixed-point iteration converges \cite{IGNN,Rudin1976}.
\end{theorem}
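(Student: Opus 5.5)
The plan is to bound the two terms of $\Phi_{\Theta}^{\mathrm{BD}}(Z)=h_\theta(Z)+\alpha\,Q_\mathrm{BD}(h_\theta(Z))$ separately, as in the proof of Theorem \ref{theorem:state-contraction}. The difference from the SD case is that $Q_\mathrm{BD}$ is composed with $h_\theta$ instead of acting on $Z$ directly. Fix $Z_1,Z_2\in\mathbb{R}^{N\times d_h}$ and write $Y_k:=h_\theta(Z_k)$ for $k=1,2$. The triangle inequality gives
\begin{equation*}
\|\Phi_{\Theta}^{\mathrm{BD}}(Z_1)-\Phi_{\Theta}^{\mathrm{BD}}(Z_2)\|_F \le \|Y_1-Y_2\|_F + \alpha\,\|Q_\mathrm{BD}(Y_1)-Q_\mathrm{BD}(Y_2)\|_F .
\end{equation*}
Here $\alpha\ge 0$, so the scalar factors out without sign issues.

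For the second term, I will apply Lemma \ref{lem:rowwise-lip}, with the node-level constant supplied by Lemma \ref{lem:nodewise-lip}, to the row-wise map $Q_\mathrm{BD}$ evaluated at $Y_1$ and $Y_2$. The only thing to check is that the lemma is being used at the correct arguments. The definition of $Q_\mathrm{BD}$ is written in terms of rows $z_i$, but in \eqref{eq:backbone-dependent} it is evaluated at the rows of $h_\theta(Z)$. Since Lemma \ref{lem:rowwise-lip} holds for arbitrary inputs $S_1,S_2\in\mathbb{R}^{N\times d_h}$, this gives $\|Q_\mathrm{BD}(Y_1)-Q_\mathrm{BD}(Y_2)\|_F\le L_q^\mathrm{BD}\|Y_1-Y_2\|_F$. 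Combining with the triangle inequality yields the factored bound $(1+\alpha L_q^\mathrm{BD})\|Y_1-Y_2\|_F$.

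The final step is to invoke Lemma \ref{lem:backbone-lip}, which gives $\|Y_1-Y_2\|_F\le\kappa\|Z_1-Z_2\|_F$ under the stated hypotheses on $\sigma$ and $\|W\|_2$. This produces \eqref{eq:bd-lipschitz}. If $\kappa(1+\alpha L_q^\mathrm{BD})<1$, the operator is a contraction on the complete metric space $(\mathbb{R}^{N\times d_h},\|\cdot\|_F)$. Banach's fixed-point theorem then gives a unique equilibrium and convergence of Picard iteration from any initialization, exactly as in Theorems \ref{theorem:independent-contraction} and \ref{theorem:state-contraction}.

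There is no real obstacle here; the argument is a composition-of-Lipschitz-maps calculation. The one point deserving care is that the factor $\kappa$ must multiply both terms. That happens only because every $Z$-dependence of $\Phi_{\Theta}^{\mathrm{BD}}$ passes through $h_\theta$, so the whole operator equals $(\mathrm{id}+\alpha Q_\mathrm{BD})\circ h_\theta$. I will make this explicit by writing that composition and using $\mathrm{Lip}_F(T_1\circ T_2)\le\mathrm{Lip}_F(T_1)\,\mathrm{Lip}_F(T_2)$, with $\mathrm{Lip}_F(\mathrm{id}+\alpha Q_\mathrm{BD})\le 1+\alpha L_q^\mathrm{BD}$. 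It is this composition structure that makes the BD budget tighter than the SD budget $\kappa+\alpha L_q^\mathrm{SD}$ whenever $\kappa<1$.
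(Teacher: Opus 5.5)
Your proposal is correct and matches the paper's proof: the same triangle-inequality split, Lemma \ref{lem:rowwise-lip} applied at the points $h_\theta(Z_1),h_\theta(Z_2)$, Lemma \ref{lem:backbone-lip} supplying the factor $\kappa$ on both terms, and then Banach's theorem. The only difference is cosmetic: you factor out $(1+\alpha L_q^{\mathrm{BD}})\|h_\theta(Z_1)-h_\theta(Z_2)\|_F$ before invoking Lemma \ref{lem:backbone-lip}, equivalently viewing the operator as $(\mathrm{id}+\alpha Q_\mathrm{BD})\circ h_\theta$, whereas the paper bounds each term by $\kappa$ separately and then sums.
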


\begin{proof}

    By the triangle inequality:
    
    \begin{equation}\label{eq:bd-triangle}
        \begin{aligned}
        \|\Phi_{\Theta}^{\mathrm{BD}}(Z_1)&-\Phi_{\Theta}^{\mathrm{BD}}(Z_2)\|_F 
        \nonumber \\ &\le \|h_\theta(Z_1)-h_\theta(Z_2)\|_F \\
        &\qquad + \alpha\,\|Q_\mathrm{BD}(h_\theta(Z_1))-Q_\mathrm{BD}(h_\theta(Z_2))\|_F
        \end{aligned}
    \end{equation}
    
    \noindent The first term is bounded as in Theorem \ref{theorem:independent-contraction}:
    
    \begin{equation}\label{eq:bd-back}
        \|h_\theta(Z_1)-h_\theta(Z_2)\|_F\le\kappa\,\|Z_1-Z_2\|_F
    \end{equation}
    
    \noindent For the second, Lemma \ref{lem:rowwise-lip} applied at the points $h_\theta(Z_1),h_\theta(Z_2)$ composed with Equation \eqref{eq:bd-back} gives:
    
    \begin{align}\label{eq:bd-qterm}
        \|Q_\mathrm{BD}(h_\theta(Z_1))&-Q_\mathrm{BD}(h_\theta(Z_2))\|_F \nonumber \\
        &\le L_q^\mathrm{BD}\,\|h_\theta(Z_1)-h_\theta(Z_2)\|_F \nonumber \\
        &\le L_q^\mathrm{BD}\,\kappa\,\|Z_1-Z_2\|_F
    \end{align}
    
    \noindent Substituting Equations \eqref{eq:bd-back} and \eqref{eq:bd-qterm} into Equation \eqref{eq:bd-triangle}:
    
    \begin{equation}
        \|\Phi_{\Theta}^{\mathrm{BD}}(Z_1)-\Phi_{\Theta}^{\mathrm{BD}}(Z_2)\|_F \le \kappa(1+\alpha\,L_q^\mathrm{BD})\,\|Z_1-Z_2\|_F
    \end{equation}
    
    \noindent where the second inequality applies Lemma \ref{lem:backbone-lip}, and $\mathrm{Lip}_F(Q_\mathrm{BD})\le L_q^\mathrm{BD}$ from Lemma \ref{lem:rowwise-lip} on the quantum term.
\end{proof}

The factor $\kappa$ multiplying $\alpha\,L_q^\mathrm{BD}$ reflects that the quantum residual acts on the backbone output $h_\theta(Z)$, which is itself $\kappa$-Lipschitz in $Z$; perturbations in $Z$ are therefore attenuated by the classical contraction before reaching the quantum stage.

\begin{remark}[Ordering of the Lipschitz bounds]\label{rem:ordering}
    At a matched Lipschitz budget $L_q:=L_q^\mathrm{SD}=L_q^\mathrm{BD}$---corresponding to the two dependent pathways sharing the same circuit structure and spectral norms $\|W_\mathrm{in}\|_2,\|W_\mathrm{out}\|_2$---the three bounds satisfy $\kappa\le\kappa(1+\alpha L_q)\le\kappa+\alpha L_q$ for $0\le\kappa\le 1$ and $\alpha L_q\ge 0$. Independent injection therefore admits contraction under the weakest hypothesis (just $\kappa<1$), backbone-dependent injection under an intermediate hypothesis ($\kappa(1+\alpha L_q)<1$), and state-dependent injection under the strongest ($\kappa+\alpha L_q<1$).
\end{remark}

\subsection{Readout and Classification Head}\label{sec:readout}

Once the equilibrium state $Z^\star$ is computed by solving Equation \eqref{eq:common-equilibrium}, the model forms a fixed-size graph representation through a permutation-invariant \textit{readout} operation that aggregates across nodes:

\begin{equation}\label{eq:readout}
    z_G = \mathrm{READOUT}(Z^\star)\in\mathbb{R}^{d_h}
\end{equation}

\noindent In our experiments, $\mathrm{READOUT}$ is attention pooling. Concretely, a multi-head self-attention \cite{Transformer} block is applied to the node states within each graph, and the attended node representations are then summed to produce $z_G$. A multilayer perceptron (MLP) classifier head maps $z_G$ to the prediction:

\begin{equation}\label{eq:classifier}
    \hat{y}=\mathrm{MLP}(z_G)
\end{equation}

\noindent Both the readout and the classifier head are shared across all three injections and are not involved in the equilibrium solve. Their parameters are trained jointly with $\Theta$ via gradients back-propagated through the implicit-differentiation step.

\section{Experiments}\label{sec:experiments}

The experiments in this section serve two purposes. First, they test whether the three injection pathways defined in \S\ref{sec:method}, despite sharing the classical backbone of Definition \ref{def:backbone}, separate empirically along test accuracy and forward-solver cost. Second, they probe whether any observed separation is consistent with the matched-budget ordering of the contraction bounds (Remark \ref{rem:ordering}). We accordingly report both predictive accuracy and the average number of forward solver iterations per fixed-point solve as primary observables, alongside training time as a practical cost. We do not claim that our contraction bounds are tight, only that they offer a qualitative explanation for the ordering observed in solver behavior.

\subsection{Experimental Setup}

We evaluate the three quantum injection pathways on the TU graph-classification benchmarks NCI1, PROTEINS, and MUTAG \cite{TUDatasets}. Unless otherwise stated, evaluation follows fold-based TU protocols with multiple random seeds, in line with fair and reproducible GNN benchmarking practice \cite{FairComparisonGNNGraphClass,BenchmarkingGNNs}. We report test accuracy, forward DEQ solver iterations, and training time.

The main comparison includes a classical equilibrium baseline and the independent, state-dependent, and backbone-dependent quantum-injection variants. The classical baseline uses the same implicit graph backbone as the proposed models but removes the quantum pathway, isolating how the auxiliary signal enters the fixed-point map. For Table \ref{tab:pathway-main}, results are aggregated over three random seeds ($42$, $123$, $456$) and $10$ folds, so the reported mean $\pm$ standard deviation are over fold--seed runs. The ``Iter.'' column reports the mean number of \emph{forward} DEQ solver iterations at the final training epoch, and ``Time (m)'' reports wall-clock training time per run in minutes, both averaged over the reported runs.

All four variants share the same classical backbone and training setup. The encoder is a single linear projection to hidden dimension $64$ (\texttt{min\_encoder}), and the classifier is a two-layer MLP, $64 \rightarrow 64 \rightarrow C$, with ReLU and dropout $0.4$. Optimization uses AdamW \cite{AdamW} with learning rate $10^{-4}$, selective weight decay $10^{-4}$, cosine annealing over $200$ epochs, batch size $32$, and gradient clipping at $1.0$. The implicit layer uses $\kappa=0.8$ with spectral clipping of $W$ to enforce $\|W\|_2 \le \kappa / \|\tilde A\|_2$, and TorchDEQ \cite{TorchDEQ} with Anderson acceleration, maximum $300$ forward iterations at tolerance $10^{-6}$, and maximum $150$ backward Anderson iterations at tolerance $10^{-5}$.

All quantum modules use $n_q=4$ qubits and one Deep XYZ repetition. For SD and BD, the residual scale is $\alpha=0.1$; for ID, the quantum module is evaluated once before the DEQ solve and held fixed throughout. All quantum results use the noiseless in-code PyTorch state-vector simulator on a single NVIDIA H100 GPU. Full implementation details, hyperparameters, software versions, experiment-specific configurations, and additional diagnostics, including gradient-landscape measurements and broader train--test gap summaries, are provided in Appendix \ref{app:repro} and the released repository.

\subsection{Injection Pathway Performance Comparison}

\begin{table*}[!t]
    \captionsetup{justification=raggedright,singlelinecheck=false}
    \caption{Comparison of injection pathways across NCI1, PROTEINS, and MUTAG. Test accuracy is reported as mean $\pm$ standard deviation across the reported runs, and forward solver iterations and training time are averaged over the same runs. Boldface denotes the best value within each dataset: highest mean accuracy and lowest mean iterations/time.}
    \label{tab:pathway-main}
    \small
    \setlength{\tabcolsep}{5.5pt}
    \begin{tabular}{@{}lccccccccc@{}}
    \toprule
    \textbf{Variant} & \multicolumn{3}{c}{\textbf{NCI1}} & \multicolumn{3}{c}{\textbf{PROTEINS}} & \multicolumn{3}{c}{\textbf{MUTAG}} \\
    \cmidrule(lr){2-4}\cmidrule(lr){5-7}\cmidrule(lr){8-10}
    & \textbf{Acc.} & \textbf{Iter.} & \textbf{Time (m)} 
    & \textbf{Acc.} & \textbf{Iter.} & \textbf{Time (m)} 
    & \textbf{Acc.} & \textbf{Iter.} & \textbf{Time (m)} \\
    \midrule
    Classical equilibrium baseline 
    & $74.0 \pm 2.0$ & $81.01$ & $\mathbf{92.24}$ 
    & $75.5 \pm 4.2$ & $21.76$ & $\mathbf{22.06}$ 
    & $72.5 \pm 9.5$ & $32.99$ & $5.53$ \\
    
    State-dependent injection
    & $74.1 \pm 2.2$ & $162.01$ & $1144.82$ 
    & $75.7 \pm 5.2$ & $23.31$ & $131.03$ 
    & $73.2 \pm 8.9$ & $35.58$ & $16.28$ \\
    
    Backbone-dependent injection
    & $74.0 \pm 2.4$ & $111.94$ & $879.09$ 
    & $75.6 \pm 4.6$ & $23.22$ & $70.71$ 
    & $73.0 \pm 8.9$ & $34.97$ & $16.71$ \\
    
    Independent injection
    & $\mathbf{75.1 \pm 2.0}$ & $\mathbf{43.72}$ & $128.57 $ 
    & $\mathbf{76.9 \pm 4.7}$ & $\mathbf{21.31}$ & $24.27 $ 
    & $\mathbf{88.7 \pm 4.2}$ & $\mathbf{32.69}$ & $\mathbf{4.19}$ \\
    \bottomrule
    \end{tabular}
\end{table*}

Table \ref{tab:pathway-main} reports predictive accuracy and solver cost together across all three datasets and pathways. Independent injection is the only variant that achieves both the best test accuracy and the lowest average forward solver iterations on all three benchmarks. Relative to the classical equilibrium baseline, it reduces average forward solver iterations by $46.0\% / 2.1\% / 0.9\%$ on NCI1 / PROTEINS / MUTAG. The largest separation appears on NCI1, where iterations drop from $81.01$ to $43.72$ while test accuracy improves from $74.0\pm 2.0$ to $75.1\pm 2.0$. The same qualitative ordering persists on PROTEINS and MUTAG: both dependent pathways remain more solver-intensive than ID injection without offering a compensating accuracy gain.

\begin{figure}[t]
    \centering
    \includegraphics[width= 0.9\columnwidth]{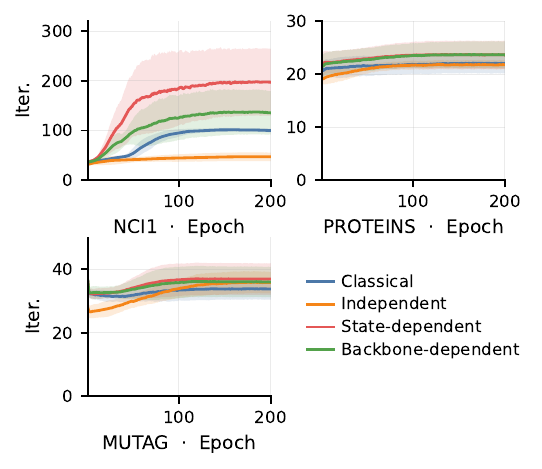}
    \caption{Training-time forward-solver effort on NCI1, PROTEINS, and MUTAG. Each panel shows mean forward DEQ solver iterations per epoch for the classical baseline and three quantum injection pathways; shaded bands denote one standard deviation across reported runs. Lower curves indicate cheaper fixed-point solves.}
    \label{fig:stability}
\end{figure}

\subsection{Solver Behavior and Theory Alignment}

Table~\ref{tab:pathway-main} and Fig.~\ref{fig:stability} show a consistent ordering in forward solver effort: ID requires the fewest iterations to achieve tolerance, SD the most, and BD generally lies between them. This ordering is most pronounced on NCI1, but the same qualitative pattern appears across all three datasets. This pattern is consistent with the contraction analysis in \S\ref{sec:contraction}. Remark~\ref{rem:ordering} gives the matched-budget Lipschitz contraction ordering:

\begin{equation}
    \kappa \le \kappa(1+\alpha L_q) \le \kappa+\alpha L_q
\end{equation}

\noindent corresponding respectively to ID, BD, and SD. While these quantities are not measured solver iterations, their ordering gives a useful qualitative explanation of solver effort: ID leaves the state-dependent part of the recurrence with the same Lipschitz budget as the classical backbone, whereas BD and SD introduce additional state-dependent quantum terms that tighten the contraction budget. Operators with more contraction headroom can reach a fixed residual tolerance in fewer solver steps, so the observed solver-cost ordering aligns with the theory without requiring the bounds to be tight.

\section{Discussion and Conclusions}\label{sec:discussion}

Three observations from \S\ref{sec:experiments} drive our discussion. First, ID injection is the only pathway that improves test accuracy over the classical equilibrium baseline on all three benchmarks. Second, the empirical ordering of forward solver effort across pathways (ID $<$ BD $<$ SD) matches the matched-budget ordering of the contraction bounds in Remark \ref{rem:ordering}. Third, both dependent pathways achieve accuracies broadly comparable to the classical baseline (albeit at the cost of higher wall-clock training time). We discuss what these observations suggest mechanistically, what they imply for hybrid quantum-classical equilibrium operator design, what the present study does and does not establish, and useful follow-ups.

\paragraph*{Interpretation of Injection Pathway Performance} ID injection couples the quantum signal to $\Phi_\Theta$ through a route the solver does not propagate state perturbations through, because the quantum term is fixed with respect to $Z$. The signal is baked in as a constant bias and therefore affects the \textit{location} of $Z^\star$ without altering the recurrence around it. SD and BD injection, by contrast, insert the quantum module into the recurrence itself, so that perturbations to $Z^{(t)}$ propagate through the quantum stage at every solver step. In SD, the quantum residual sees the raw iterate, so iterate-level noise enters $q_\eta$ directly; in BD, the classical backbone damps perturbations by a factor $\kappa$ before they reach $q_\eta$. The contraction analysis gives only sufficient conditions for a unique fixed point, which we do not verify hold for the trained models; nevertheless, the fact that the empirical test accuracy and solver-effort ordering matches the Lipschitz ordering is suggestive.

\begin{figure*}[t]
    \centering
    \includegraphics[
        width=\textwidth
    ]{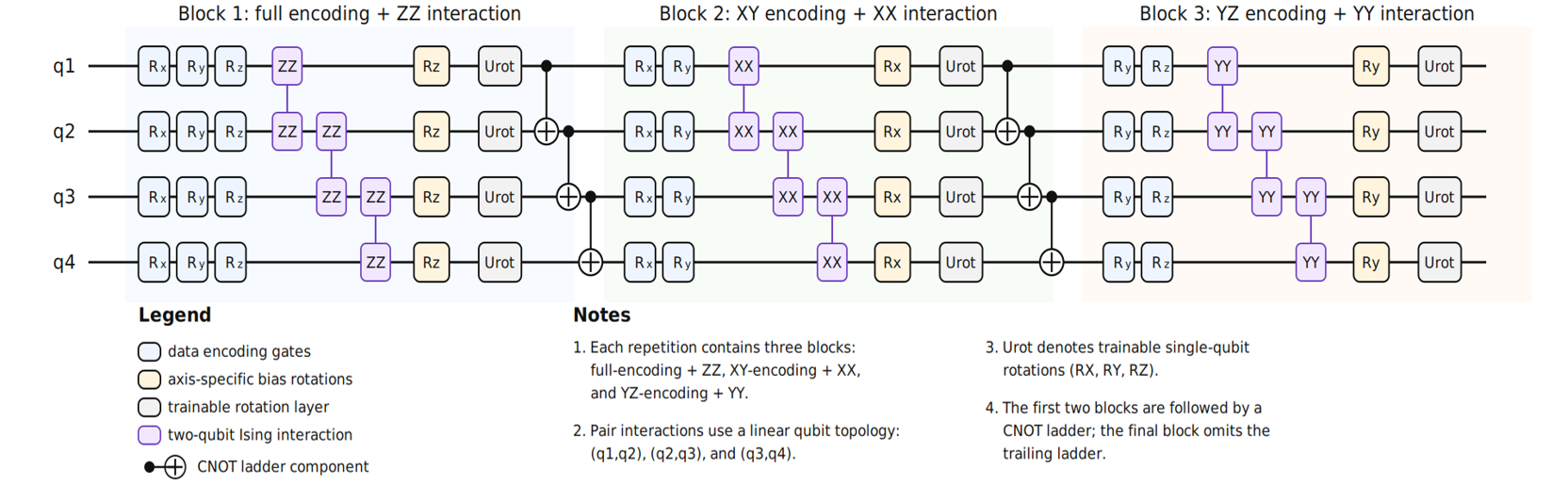}
    \caption{Deep XYZ Ansatz used in the reported experiments. The appendix schematic shows one repetition of the circuit, combining three data-dependent encoding blocks, Ising-type interaction blocks, and trainable single-qubit rotation layers.}
    \label{fig:ansatz-appendix}
\end{figure*}

\paragraph*{Implications for the Design of Injections} A conservative reading is that, when coupling an auxiliary module (quantum or otherwise) to an equilibrium operator, iterate-independent conditioning should be the default; iterate-dependent designs should be justified by a need for that behavior. The reason is twofold. First, evaluating the quantum module only once per forward solve rather than once per solver step makes ID cheaper in measurement shots and parameter-shift gradient calls than either dependent variant at the same circuit depth. Second, ID adds no $\partial Q/\partial Z$ term to the fixed-point Jacobian, so the solver does not repeatedly compose a quantum state-dependent residual during the forward solve, which may help keep optimization away from barren-plateaus \cite{BarrenPlateauOriginal,Ragone2024,Fontana2024}. Gradient diagnostics in the repository provide supporting evidence.

\paragraph*{Limitations and Future Work} This study is operates in a simulated, small qubit number regime. Three additional limitations are worth mentioning. (i) We report no non-quantum node-level static-conditioning baseline matched to the ID mechanism; without it we cannot cleanly separate the benefit of \textit{iterate-independent conditioning as a design} from the benefit of the \textit{quantum module} specifically. (ii) Our dependent injections are supported by weaker contraction conditions than ID injection; sharper contraction analyses (\textit{e.g.}, via monotone-operator splitting \cite{MonotoneIGNN,Baker2023MonotoneIGNN}) have not been attempted. (iii) Broader diagnostic runs show nontrivial train--test gaps on some datasets, indicating that regularization and model selection remain open issues orthogonal to the pathway choice; these diagnostics are provided in the repository.

\section{Conclusions} We have proposed and analyzed three operator-level pathways for coupling an encode--unitary--measure quantum module to an implicit graph equilibrium operator. We have established contraction guarantees for each, and shown empirically that the pathways separate along both accuracy and solver cost in a way consistent with our theoretical analysis. Our results indicate that \textit{where} an auxiliary signal enters a fixed-point map is a first-class design choice.

\appendices
\renewcommand{\thesubsection}{\thesection.\alph{subsection}}

\section{Implementation and Reproducibility Details}
\label{app:repro}

Code, configuration files, environment specifications, and scripts for reproducing the reported experiments are publicly available at \url{https://github.com/cxMoonGlade/QIP_IGNN}. The repository includes the training script (\texttt{train.py}), argument parsing and model construction (\texttt{model\_factory.py}), the \texttt{qignn} package implementing the three injection operators of \S\ref{sec:pathways}, a requirements file specifying the pinned dependencies (PyTorch $\ge 2.0$, PyTorch Geometric, TorchDEQ for the implicit fixed-point solver), and the command lines used to reproduce the main Table~\ref{tab:pathway-main} and Fig.~\ref{fig:stability} results. Additional figures and data, including the barren-plateau diagnostic and broader generalization diagnostics, are provided in the repository.

\section{Deep XYZ Ansatz Used in Experiments}
\label{app:ansatz}

Standard PQC templates apply the data-dependent encoding unitary $S_{u_v}$ once and process the resulting state through a single trainable block $U(\eta)$ \cite{VQA,PQCReview2019}. The \textit{data re-uploading} design of P\'erez-Salinas \textit{et al.} \cite{DRQNN} interleaves encoding and training instead, re-encoding the same $u_v$ at multiple depths with a trainable layer after each. For angle-encoded inputs, each additional re-encoding enlarges the set of Fourier frequencies the measured observables can represent as functions of $u_v$, so the function class expressible through $q_\eta$ grows with the number of re-uploadings not only with trainable depth.

The Deep XYZ ansatz of Fig.~\ref{fig:ansatz-appendix} realizes this principle through three structured blocks, broadening the generators that carry $u_v$ into the state. Block~1 applies full $(R_x,R_y,R_z)$ rotations, Block~2 only $(R_x,R_y)$, and Block~3 only $(R_y,R_z)$, and the three blocks employ distinct Ising entanglers ($ZZ$, $XX$, $YY$) on a linear qubit topology so that the trainable correlations span all three Pauli axes rather than one. In our experiments, each active quantum module uses one repetition of this template ($R=1$). The ansatz thus enlarges the expressivity of $q_\eta$ at modest qubit count and per-repetition depth, consistent with the shallow-circuit motivation of the QDEQ viewpoint adopted in \S\ref{sec:intro-qml}.

\section*{Acknowledgments}

\small{PX, TZ, LFM, and HAM were funded by a National Sciences and Engineering Research Council of Canada (NSERC) Collaborative Research and Training Experience (CREATE) grant on Quantum Computing, NSERC Alliance Consortium Grant entitled Quantum Software Consortium -- Exploring Distributed Quantum Solutions for Canada (QSC), and NSERC Alliance grant on Quantum Computing for Optimal Mobility. The authors gratefully acknowledge Yewei Wang for providing computational resources that enabled the experiments in this study.}

\newpage

\bibliographystyle{IEEEtran}
\bibliography{refs}

@article{ParameterShift,
  author  = {Schuld, Maria and Bergholm, Ville and Gogolin, Christian and Izaac, Josh and Killoran, Nathan},
  title   = {Evaluating analytic gradients on quantum hardware},
  journal = {Physical Review A},
  volume  = {99},
  number  = {3},
  pages   = {032331},
  year    = {2019},
  doi     = {10.1103/PhysRevA.99.032331},
  url     = {https://doi.org/10.1103/PhysRevA.99.032331}
}

@article{GNNreview18,
  title   = {Graph Neural Networks: A Review of Methods and Applications},
  author  = {Zhou, Jie and Cui, Ganqu and Hu, Shengding and Zhang, Zhengyan and Yang, Cheng and Liu, Zhiyuan and Wang, Lifeng and Li, Changcheng and Sun, Maosong},
  journal = {AI Open},
  volume  = {1},
  pages   = {57--81},
  year    = {2020},
  doi     = {10.1016/j.aiopen.2021.01.001},
  url     = {https://doi.org/10.1016/j.aiopen.2021.01.001}
}

@article{GNNreview24,
  author    = {Corso, Gabriele and Stark, Hannes and Jegelka, Stefanie and Jaakkola, Tommi and Barzilay, Regina},
  title     = {Graph neural networks},
  journal   = {Nature Reviews Methods Primers},
  year      = {2024},
  month     = {mar},
  volume    = {4},
  number    = {1},
  pages     = {17},
  doi       = {10.1038/s43586-024-00294-7},
  url       = {https://doi.org/10.1038/s43586-024-00294-7},
  issn      = {2662-8449}
}

@inproceedings{GCN,
  author = {Kipf, Thomas N. and Welling, Max},
  title = {Semi-Supervised Classification with Graph Convolutional Networks},
  booktitle = {International Conference on Learning Representations (ICLR)},
  year = {2017},
  url = {https://doi.org/10.48550/arXiv.1609.02907}
}

@misc{GraphSAGE,
  title={Inductive Representation Learning on Large Graphs}, 
  author={William L. Hamilton and Rex Ying and Jure Leskovec},
  year={2018},
  eprint={1706.02216},
  archivePrefix={arXiv},
  primaryClass={cs.SI},
  url={https://doi.org/10.48550/arXiv.1706.02216}
}

@misc{InductiveQGNN,
      title={Inductive Graph Representation Learning with Quantum Graph Neural Networks}, 
      author={Arthur M. Faria and Ignacio F. Graña and Savvas Varsamopoulos},
      year={2026},
      eprint={2503.24111},
      archivePrefix={arXiv},
      primaryClass={quant-ph},
      url={https://arxiv.org/abs/2503.24111}, 
}

@inproceedings{GraphQNTK,
  author = {Tang, Yehui and Yan, Junchi},
  booktitle = {Advances in Neural Information Processing Systems},
  editor = {S. Koyejo and S. Mohamed and A. Agarwal and D. Belgrave and K. Cho and A. Oh},
  pages = {6104--6118},
  publisher = {Curran Associates, Inc.},
  title = {{GraphQNTK}: Quantum Neural Tangent Kernel for Graph Data},
  doi = {10.5555/3600270.3600712},
  url = {https://doi.org/10.5555/3600270.3600712},
  volume = {35},
  year = {2022}
}

@inproceedings{GIN,
  title={How Powerful are Graph Neural Networks?},
  author={Xu, Keyulu and Hu, Weihua and Leskovec, Jure and Jegelka, Stefanie},
  booktitle={International Conference on Learning Representations (ICLR)},
  year={2019},
  url={https://doi.org/10.48550/arXiv.1810.00826}
}

@inproceedings{OonoSuzuki2020,
  title={Graph Neural Networks Exponentially Lose Expressive Power for Node Classification},
  author={Oono, Kenta and Suzuki, Taiji},
  booktitle={International Conference on Learning Representations (ICLR)},
  year={2020},
  url={https://openreview.net/forum?id=S1ldO2EFPr}
}

@inproceedings{DEQ,
  title={Deep Equilibrium Models},
  author={Bai, Shaojie and Kolter, J. Zico and Koltun, Vladlen},
  booktitle={Advances in Neural Information Processing Systems},
  year={2019},
  url={https://doi.org/10.48550/arXiv.1909.01377},
}

@inproceedings{IGNN,
  author = {Gu, Fangda and Chang, Heng and Zhu, Wenwu and Sojoudi, Somayeh and El Ghaoui, Laurent},
  booktitle = {Advances in Neural Information Processing Systems},
  editor = {H. Larochelle and M. Ranzato and R. Hadsell and M.F. Balcan and H. Lin},
  pages = {11984--11995},
  publisher = {Curran Associates, Inc.},
  title = {Implicit Graph Neural Networks},
  url = {https://proceedings.neurips.cc/paper_files/paper/2020/file/8b5c8441a8ff8e151b191c53c1842a38-Paper.pdf},
  volume = {33},
  year = {2020}
}

@inproceedings{Baker2023MonotoneIGNN,
  title={Implicit Graph Neural Networks: A Monotone Operator Viewpoint},
  author={Baker, Justin and Wang, Qingsong and Hauck, Cory D and Wang, Bao},
  booktitle={Proceedings of the 40th International Conference on Machine Learning},
  series={Proceedings of Machine Learning Research},
  volume={202},
  pages={1521--1548},
  year={2023},
  publisher={PMLR},
  url={https://proceedings.mlr.press/v202/baker23a.html}
}

@misc{IDL,
  title={Implicit Deep Learning}, 
  author={Laurent El Ghaoui and Fangda Gu and Bertrand Travacca and Armin Askari and Alicia Y. Tsai},
  year={2020},
  eprint={1908.06315},
  archivePrefix={arXiv},
  primaryClass={cs.LG},
  url={https://doi.org/10.48550/arXiv.1908.06315},
}

@inproceedings{EIGNN,
  title     = {EIGNN: Efficient Infinite-Depth Graph Neural Networks},
  author    = {Liu, Juncheng and Kawaguchi, Kenji and Hooi, Bryan and Wang, Yiwei and Xiao, Xiaokui},
  booktitle = {Advances in Neural Information Processing Systems},
  volume    = {34},
  pages     = {18762--18773},
  year      = {2021},
  url       = {https://proceedings.neurips.cc/paper/2021/hash/9bd5ee6fe55aaeb673025dbcb8f939c1-Abstract.html}
}

@inproceedings{MonotoneIGNN,
  title     = {Monotone operator equilibrium networks},
  author    = {Winston, Ezra and Kolter, J. Zico},
  booktitle = {Advances in Neural Information Processing Systems},
  volume    = {33},
  pages     = {10718--10728},
  year      = {2020},
  url       = {https://proceedings.neurips.cc/paper_files/paper/2020/hash/798d1c2813cbdf8bcdb388db0e32d496-Abstract.html}
}

@misc{MGNNI,
  title={MGNNI: Multiscale Graph Neural Networks with Implicit Layers}, 
  author={Juncheng Liu and Bryan Hooi and Kenji Kawaguchi and Xiaokui Xiao},
  year={2022},
  eprint={2210.08353},
  archivePrefix={arXiv},
  primaryClass={cs.LG},
  url={https://doi.org/10.48550/arXiv.2210.08353},
}

@misc{GIND,
  title={Optimization-Induced Graph Implicit Nonlinear Diffusion}, 
  author={Qi Chen and Yifei Wang and Yisen Wang and Jiansheng Yang and Zhouchen Lin},
  year={2022},
  eprint={2206.14418},
  archivePrefix={arXiv},
  primaryClass={cs.LG},
  url={https://doi.org/10.48550/arXiv.2206.14418},
}

@InProceedings{CerDEQ2022,
  title = 	 {{C}er{DEQ}: Certifiable Deep Equilibrium Model},
  author =       {Li, Mingjie and Wang, Yisen and Lin, Zhouchen},
  booktitle = 	 {Proceedings of the 39th International Conference on Machine Learning},
  pages = 	 {12998--13013},
  year = 	 {2022},
  editor = 	 {Chaudhuri, Kamalika and Jegelka, Stefanie and Song, Le and Szepesvari, Csaba and Niu, Gang and Sabato, Sivan},
  volume = 	 {162},
  series = 	 {Proceedings of Machine Learning Research},
  month = 	 {17--23 Jul},
  publisher =    {PMLR},
  url = 	 {https://proceedings.mlr.press/v162/li22t.html}
}

@inproceedings{LyaDEQ2023,
  title     = {Lyapunov-Stable Deep Equilibrium Models},
  author    = {Chu, Haoyu and Wei, Shikui and Liu, Ting and Zhao, Yao and Miyatake, Yuto},
  booktitle = {Proceedings of the AAAI Conference on Artificial Intelligence},
  volume    = {38},
  number    = {10},
  pages     = {11615--11623},
  year      = {2024},
  doi       = {10.1609/aaai.v38i10.29044},
  url       = {https://doi.org/10.1609/aaai.v38i10.29044},
}

@misc{TorchDEQ,
  title={TorchDEQ: A Library for Deep Equilibrium Models}, 
  author={Zhengyang Geng and J. Zico Kolter},
  year={2023},
  eprint={2310.18605},
  archivePrefix={arXiv},
  primaryClass={cs.LG},
  url={https://doi.org/10.48550/arXiv.2310.18605},
}

@article{PQCReview2019,
  title={Parameterized quantum circuits as machine learning models},
  volume={4},
  ISSN={2058-9565},
  doi={10.1088/2058-9565/ab4eb5},
  url={https://doi.org/10.1088/2058-9565/ab4eb5},
  number={4},
  journal={Quantum Science and Technology},
  publisher={IOP Publishing},
  author={Benedetti, Marcello and Lloyd, Erika and Sack, Stefan and Fiorentini, Mattia},
  year={2019},
  month=nov,
  pages={043001},
}

@article{DRQNN,
  title={Data re-uploading for a universal quantum classifier},
  volume={4},
  ISSN={2521-327X},
  doi={10.22331/q-2020-02-06-226},
  url={https://doi.org/10.22331/q-2020-02-06-226},
  journal={Quantum},
  publisher={Verein zur Forderung des Open Access Publizierens in den Quantenwissenschaften},
  author={Pérez-Salinas, Adrián and Cervera-Lierta, Alba and Gil-Fuster, Elies and Latorre, José I.},
  year={2020},
  month=feb,
  pages={226},
}

@article{QuantumNeuralNetworks,
  title={The power of quantum neural networks},
  volume={1},
  ISSN={2662-8457},
  doi={10.1038/s43588-021-00084-1},
  url={https://doi.org/10.1038/s43588-021-00084-1},
  number={6},
  journal={Nature Computational Science},
  publisher={Springer Science and Business Media LLC},
  author={Abbas, Amira and Sutter, David and Zoufal, Christa and Lucchi, Aurelien and Figalli, Alessio and Woerner, Stefan},
  year={2021},
  month=jun,
  pages={403--409},
}

@article{VQA,
  title={Variational quantum algorithms},
  volume={3},
  ISSN={2522-5820},
  doi={10.1038/s42254-021-00348-9},
  url={https://doi.org/10.1038/s42254-021-00348-9},
  number={9},
  journal={Nature Reviews Physics},
  publisher={Springer Science and Business Media LLC},
  author={Cerezo, M. and Arrasmith, Andrew and Babbush, Ryan and Benjamin, Simon C. and Endo, Suguru and Fujii, Keisuke and McClean, Jarrod R. and Mitarai, Kosuke and Yuan, Xiao and Cincio, Lukasz and Coles, Patrick J.},
  year={2021},
  month=aug,
  pages={625--644},
}

@article{BarrenPlateauOriginal,
  title={Barren plateaus in quantum neural network training landscapes},
  volume={9},
  ISSN={2041-1723},
  doi={10.1038/s41467-018-07090-4},
  url={https://doi.org/10.1038/s41467-018-07090-4},
  number={1},
  journal={Nature Communications},
  publisher={Springer Science and Business Media LLC},
  author={McClean, Jarrod R. and Boixo, Sergio and Smelyanskiy, Vadim N. and Babbush, Ryan and Neven, Hartmut},
  year={2018},
  month=nov,
}

@article{Ragone2024,
  title={A Lie algebraic theory of barren plateaus for deep parameterized quantum circuits},
  volume={15},
  ISSN={2041-1723},
  doi={10.1038/s41467-024-49909-3},
  url={https://doi.org/10.1038/s41467-024-49909-3},
  number={1},
  journal={Nature Communications},
  publisher={Springer Science and Business Media LLC},
  author={Ragone, Michael and Bakalov, Bojko N. and Sauvage, Frédéric and Kemper, Alexander F. and Ortiz Marrero, Carlos and Larocca, Martín and Cerezo, M.},
  year={2024},
  month=aug,
}

@article{Fontana2024,
  title={Characterizing barren plateaus in quantum ans{\"a}tze with the adjoint representation},
  volume={15},
  ISSN={2041-1723},
  doi={10.1038/s41467-024-49910-w},
  url={https://doi.org/10.1038/s41467-024-49910-w},
  number={1},
  journal={Nature Communications},
  publisher={Springer Science and Business Media LLC},
  author={Fontana, Enrico and Herman, Dylan and Chakrabarti, Shouvanik and Kumar, Niraj and Yalovetzky, Romina and Heredge, Jamie and Sureshbabu, Shree Hari and Pistoia, Marco},
  year={2024},
  month=aug,
}

@misc{QGNN,
  title        = {Quantum Graph Neural Networks},
  author       = {Verdon, Guillaume and McCourt, Trevor and Luzhnica, Enxhell and Singh, Vikash and Leichenauer, Stefan and Hidary, Jack},
  year         = {2019},
  eprint       = {1909.12264},
  archivePrefix= {arXiv},
  primaryClass = {quant-ph},
  note         = {arXiv preprint},
  url          = {https://doi.org/10.48550/arXiv.1909.12264},
}

@misc{QGNNreview24,
  title={From Graphs to Qubits: A Critical Review of Quantum Graph Neural Networks}, 
  author={Andrea Ceschini and Francesco Mauro and Francesca De Falco and Alessandro Sebastianelli and Alessio Verdone and Antonello Rosato and Bertrand Le Saux and Massimo Panella and Paolo Gamba and Silvia L. Ullo},
  year={2024},
  eprint={2408.06524},
  archivePrefix={arXiv},
  primaryClass={quant-ph},
  url={https://doi.org/10.48550/arXiv.2408.06524},
}

@misc{QGCN,
  title={Quantum Graph Convolutional Neural Networks}, 
  author={Jin Zheng and Qing Gao and Yanxuan Lv},
  year={2021},
  eprint={2107.03257},
  archivePrefix={arXiv},
  primaryClass={eess.SP},
  url={https://doi.org/10.48550/arXiv.2107.03257},
}

@inproceedings{Hu2022QuGCNDesign,
  title={On the Design of Quantum Graph Convolutional Neural Network in the NISQ-Era and Beyond},
  author={Hu, Zhirui and Li, Jinyang and Pan, Zhenyu and Zhou, Shanglin and Yang, Lei and Ding, Caiwen and Khan, Omer and Geng, Tong and Jiang, Weiwen},
  booktitle={Proceedings - 2022 IEEE 40th International Conference on Computer Design, ICCD 2022},
  pages={290--297},
  year={2022},
  doi={10.1109/ICCD56317.2022.00050},
  url={https://doi.org/10.1109/ICCD56317.2022.00050},
}

@inproceedings{GQGLA,
  title={On Designing General and Expressive Quantum Graph Neural Networks with Applications to {MILP} Instance Representation},
  author={Xinyu Ye and Hao Xiong and Jianhao Huang and Ziang Chen and Jia Wang and Junchi Yan},
  booktitle={The Thirteenth International Conference on Learning Representations},
  year={2025},
  url={https://openreview.net/forum?id=IQi8JOqLuv},
}

@misc{HaQGNN,
  title={HaQGNN: Hardware-Aware Quantum Kernel Design Based on Graph Neural Networks}, 
  author={Yuxiang Liu and Fanxu Meng and Lu Wang and Yi Hu and Sixuan Li and Xutao Yu and Zaichen Zhang},
  year={2025},
  eprint={2506.21161},
  archivePrefix={arXiv},
  primaryClass={quant-ph},
  url={https://doi.org/10.48550/arXiv.2506.21161},
}

@inproceedings{QDEQ,
  title={Quantum Deep Equilibrium Models},
  author={Philipp Schleich and Marta Skreta and Lasse Bj{\o}rn Kristensen and Rodrigo Vargas-Hernandez and Alan Aspuru-Guzik},
  booktitle={The Thirty-eighth Annual Conference on Neural Information Processing Systems},
  year={2024},
  url={https://openreview.net/forum?id=CWhwKb0Q4k},
}

@misc{ResHQCNN,
  title={A Hybrid Quantum-Classical Neural Network with Deep Residual Learning},
  author={Yanying Liang and Wei Peng and Zhu-Jun Zheng and Olli Silv{\'e}n and Guoying Zhao},
  year={2021},
  eprint={2012.07772},
  archivePrefix={arXiv},
  primaryClass={cs.LG},
  url={https://doi.org/10.48550/arXiv.2012.07772},
}

@article{ResQNets,
  title={ResQNets: A Residual Approach for Mitigating Barren Plateaus in Quantum Neural Networks},
  author={Muhammad Kashif and Saif Al-kuwari},
  journal={EPJ Quantum Technology},
  volume={11},
  pages={4},
  year={2024},
  doi={10.1140/epjqt/s40507-023-00216-8},
  url={https://doi.org/10.1140/epjqt/s40507-023-00216-8},
}

@inproceedings{QCE22Embedding,
  title={Embedding Learning in Hybrid Quantum-Classical Neural Networks},
  doi={10.1109/QCE53715.2022.00026},
  url={https://doi.org/10.1109/QCE53715.2022.00026},
  booktitle={2022 IEEE International Conference on Quantum Computing and Engineering (QCE)},
  publisher={IEEE},
  author={Liu, Minzhao and Liu, Junyu and Liu, Rui and Makhanov, Henry and Lykov, Danylo and Apte, Anuj and Alexeev, Yuri},
  year={2022},
  month=sep,
  pages={79--86},
}

@INPROCEEDINGS{QNetGAN,
  author={Cui, Max and Chang, Linda and Chau, Adelina and Mekuria, Hasset and Adwankar, Leena and Pendyala, Sriaditya and McMahan, Larry},
  booktitle={2024 IEEE International Conference on Quantum Computing and Engineering (QCE)}, 
  title={Efficient and Optimized Small Organic Molecular Graph Generation Pathway Using a Quantum Generative Adversarial Network}, 
  year={2024},
  volume={01},
  pages={1565--1570},
  doi={10.1109/QCE60285.2024.00183},
  url={https://doi.org/10.1109/QCE60285.2024.00183},
}

@INPROCEEDINGS{QCE24HQCGNN,
  author={Ray, Anupama and Madan, Dhiraj and Patil, Srushti and Pati, Pushpak and Rapsomaniki, Marianna and Kohlakala, Aviwe and Dlamini, Thembelihle Rose and Muller, Stephanie Julia and Rhrissorrakrai, Kahn and Utro, Filippo and Parida, Laxmi},
  booktitle={2024 IEEE International Conference on Quantum Computing and Engineering (QCE)}, 
  title={Hybrid Quantum-Classical Graph Neural Networks for Tumor Classification in Digital Pathology}, 
  year={2024},
  volume={01},
  pages={1611--1616},
  doi={10.1109/QCE60285.2024.00188},
  url={https://doi.org/10.1109/QCE60285.2024.00188},
}

@inproceedings{QCE24GNN_EE,
  title={Graph Neural Networks for Parameterized Quantum Circuits Expressibility Estimation},
  doi={10.1109/QCE60285.2024.00181},
  url={https://doi.org/10.1109/QCE60285.2024.00181},
  booktitle={2024 IEEE International Conference on Quantum Computing and Engineering (QCE)},
  publisher={IEEE},
  author={Aktar, Shamminuj and Bärtschi, Andreas and Oyen, Diane and Eidenbenz, Stephan and Badawy, Abdel-Hameed A.},
  year={2024},
  month=sep,
  pages={1547--1557},
}

@inproceedings{QCE24OH_GNN,
  title={On Optimizing Hyperparameters for Quantum Neural Networks},
  doi={10.1109/QCE60285.2024.00174},
  url={https://doi.org/10.1109/QCE60285.2024.00174},
  booktitle={2024 IEEE International Conference on Quantum Computing and Engineering (QCE)},
  publisher={IEEE},
  author={Herbst, Sabrina and De Maio, Vincenzo and Brandic, Ivona},
  year={2024},
  month=sep,
  pages={1478--1489},
}

@inproceedings{Transformer,
  title = {Attention Is All You Need},
  author = {Vaswani, Ashish and Shazeer, Noam and Parmar, Niki and Uszkoreit, Jakob and Jones, Llion and Gomez, Aidan N. and Kaiser, Lukasz and Polosukhin, Illia},
  booktitle = {Advances in Neural Information Processing Systems (NeurIPS)},
  volume = {30},
  pages = {5998--6008},
  year = {2017},
  url={https://doi.org/10.48550/arXiv.1706.03762},
}

@article{FiLM, 
  title={FiLM: Visual Reasoning with a General Conditioning Layer}, 
  volume={32}, 
  url={https://doi.org/10.1609/aaai.v32i1.11671}, 
  DOI={10.1609/aaai.v32i1.11671}, 
  number={1}, 
  journal={Proceedings of the AAAI Conference on Artificial Intelligence}, 
  author={Perez, Ethan and Strub, Florian and de Vries, Harm and Dumoulin, Vincent and Courville, Aaron}, 
  year={2018}, 
  month={Apr.} ,
}

@misc{AdamW,
  title={Decoupled Weight Decay Regularization}, 
  author={Ilya Loshchilov and Frank Hutter},
  year={2019},
  eprint={1711.05101},
  archivePrefix={arXiv},
  primaryClass={cs.LG},
  url={https://doi.org/10.48550/arXiv.1711.05101},
}

@inproceedings{TUDatasets,
  title={TUDataset: A collection of benchmark datasets for learning with graphs},
  author={Christopher Morris and Nils M. Kriege and Franka Bause and Kristian Kersting and Petra Mutzel and Marion Neumann},
  booktitle={ICML 2020 Workshop on Graph Representation Learning and Beyond (GRL+ 2020)},
  archivePrefix={arXiv},
  eprint={2007.08663},
  url={https://doi.org/10.48550/arXiv.2007.08663},
  year={2020},
}

@inproceedings{FairComparisonGNNGraphClass,
  title={A Fair Comparison of Graph Neural Networks for Graph Classification},
  author={Errica, Federico and Podda, Marco and Bacciu, Davide and Micheli, Alessio},
  booktitle={International Conference on Learning Representations (ICLR)},
  year={2020},
  url={https://openreview.net/forum?id=HygDF6NFPB},
}

@misc{BenchmarkingGNNs,
  title={Benchmarking Graph Neural Networks},
  author={Dwivedi, Vijay Prakash and Joshi, Chaitanya K. and Luu, Anh Tuan and Laurent, Thomas and Bengio, Yoshua and Bresson, Xavier},
  year={2020},
  eprint={2003.00982},
  archivePrefix={arXiv},
  primaryClass={cs.LG},
  url={https://doi.org/10.48550/arXiv.2003.00982},
}

@book{Rudin1976,
  title={Principles of mathematical analysis},
  author={Rudin, Walter},
  volume={3},
  year={1976},
  publisher={McGraw-Hill}
}

@book{HornJohnson2012,
  title={Matrix Analysis},
  author={Horn, Roger A. and Johnson, Charles R.},
  edition={2nd},
  year={2012},
  publisher={Cambridge University Press},
  doi={10.1017/9781139020411},
  url={https://doi.org/10.1017/9781139020411},
}

\end{document}